\newtheorem{theorem}{Theorem}
\newtheorem{lemma}[theorem]{Lemma}
\newtheorem{corollary}[theorem]{Corollary}
\theoremstyle{definition}
\theoremstyle{remark}
\title{The Triple-Pair Construction for Weighted $\omega$-Pushdown Automata}
\author{Manfred Droste
\institute{Universit\"at Leipzig, Institut f\"ur Informatik,\\ Germany}
\email{droste@informatik.uni-leipzig.de}
\and
Zolt\'an \'Esik\footnote{Zolt\'an \'Esik died on May 25, 2016.}
\institute{University of Szeged,\\ Department of Foundations of Computer Science,\\ Hungary}
\and
Werner Kuich
\institute{Technische Universit\"at Wien, \\
        Institut f\"ur Diskrete Mathematik und Geometrie,\\ Austria}
\email{kuich@tuwien.ac.at}
}
\newcommand*{\Sr}{S}
\newcommand*{\Pmc}{\mathcal{P}}
\newcommand*{\Sn}{V^n}
\newcommand*{\Snn}{\Sr^{n \times n}}
\newcommand*{\Spnn}{{\Sr'}^{n \times n}}
\newcommand*{\Snngam}{(\Snn)^{\Gamma^* \times \Gamma^*}}
\newcommand*{\Spnngam}{(\Spnn)^{\Gamma^* \times \Gamma^*}}
\newcommand*{\llss}{\ll \Sigma^* \gg}
\newcommand*{\llso}{\ll \Sigma^\omega \gg}
\newcommand*{\ssigep}{S \langle \Sigma \cup \{\epsilon\} \rangle}
\newcommand\mydots{\ifmmode\ldots\else\makebox[0.8em][c]{.\hfil.\hfil.}\fi}
\newcommand*{\bhvr}[1]{\Vert #1 \Vert}
\renewcommand{\epsilon}{\varepsilon}
\newcommand{\N}{\mathbb{N}}
\newcommand*{\Ninf}{\N^\infty}
\newcommand{\B}{\mathbb{B}}
\newcommand{\Pcc}{\mathcal{P}}
\newcommand*{\ipj}{[i,p,j]}
\newcommand*{\Moml}{M^{\omega,l}}
\newcommand*{\mpom}{[m_1,p_0,m_2]}
\newcommand*{\mpim}{[m_1,p_1,m_2]}
\newcommand*{\ip}{[i,p]}
\newcommand*{\derl}{\Rightarrow_{\!L}}
\newcommand*{\derls}{\Rightarrow_{\!L}^*}
\newcommand*{\Nlse}{\N^\infty \langle \Sigma \cup \{\epsilon\} \rangle}
\newcommand*{\rom}[1]{\expandafter\@slowromancap\romannumeral #1@}
\begin{document}
\maketitle

\begin{abstract}
Let $\Sr$ be a complete star-omega semiring and $\Sigma$ be an alphabet.
For a weighted $\omega$-pushdown automaton  $\Pcc$
with stateset $\{1, \dots, n\}$, $n \geq 1$, we show that
there exists a mixed algebraic system over a complete semiring-semimodule pair
${((\Sr \ll \Sigma^* \gg)^{n\times n},  (\Sr \ll \Sigma^{\omega}\gg)^n)}$ such
that the behavior $\bhvr{\Pcc}$ of $\Pcc$ is a component of a solution of this system.
In case the basic semiring is $\B$ or $\N^{\infty}$  we show that there exists
a mixed context-free grammar that generates $\bhvr{\Pcc}$.
The construction of the mixed context-free grammar
from $\Pcc$ is a generalization of the well known triple
construction and is called now triple-pair construction for $\omega$-pushdown automata.
\end{abstract}

\section{Introduction and preliminaries}

Weighted pushdown automata were introduced by Kuich, Salomaa
\cite{88}.
Many results on classical pushdown automata and
context-free grammars can be generalized to
weighted pushdown automata and algebraic systems.
Classic pushdown automata can also be used to accept infinite
words (see Cohen, Gold \cite{23})
and it is this aspect we generalize in our paper.
We consider weighted $\omega$-pushdown automata and their
relation to algebraic systems over a complete
semiring-semimodule pair $(\Snn, \Sn)$.
It turns out that the well known triple construction for
pushdown automata can be generalized to a triple-pair construction for weighted $\omega$-pushdown automata.
Our paper generalizes results of Droste, Kuich \cite{roc}.

The paper consists of this and three more sections.
In Section \ref{sec:2}, pushdown transition matrices are introduced and their properties are studied.
The main result of this section is that, for such a matrix $M$, the $p$-blocks, $p$ a pushdown symbol, of the infinite column vector $\Moml$ satisfy a special equality.
In Section \ref{sec:3}, weighted $\omega$-pushdown automata are introduced.
We show that for a weighted $\omega$-pushdown automaton
$\Pcc$ there exists a mixed algebraic system such that
the behavior $\bhvr{\Pcc}$ of $\Pcc$ is a component of a
solution of this system.
In Section \ref{sec:4} we consider the case that the complete
star-omega semiring $\Sr$ is equal to $\B$ or $\Ninf$.
Then for a given weighted $\omega$-pushdown automaton $\Pcc$
a mixed context-free grammar is constructed that generates
$\bhvr{\Pcc}$.
The construction is a generalization of the well known triple
construction and is called \emph{triple-pair construction
for $\omega$-pushdown automata}.

For the convenience of the reader, we quote definitions and results of \'Esik, Kuich \cite{42,43,44,45} from \'Esik, Kuich \cite{MAT}. The reader should be familiar with Sections 5.1-5.6 of \'Esik, Kuich \cite{MAT}.

A semiring $S$ is called \emph{complete starsemiring} if sums for all
families $(s_i \mid i \in I)$ of elements of $S$ are defined, where $I$ is an arbitrary index set, and if S is
equipped with an additional unary star operation
$^*: \Sr \to \Sr$ defined by $s^* = \sum_{j\geq 0} s^j$ for all $s \in \Sr$.
Moreover, certain conditions have to be satisfied making sure that computations with ``infinite'' sums can be performed analogous to those with finite sums.

A pair $(S,V)$, where $S$ is a complete starsemiring and $V$ is a complete $\Sr$-semimodule is called a
\emph{complete semiring-semimodule pair} if products for
all sequences $(s_i \mid i \in \N)$ of elements of $S$ are
defined and if $S$ and $V$ are equipped with an omega
operation $^\omega: \Sr \to V$ defined by
$s^\omega = \prod_{j\geq1} s$ for all $s \in \Sr$.
Moreover, certain conditions
(e.g. ``infinite'' distributive laws) have to be satisfied making sure that computations with  ``infinite'' sums and  ``infinite'' products can be performed analogous to those with finite sums and finite products.
(For details see Conway \cite{25}, Eilenberg \cite{29}, Bloom, \'Esik \cite{10}, \'Esik, Kuich \cite{MAT}, pages 30 and 105-107.)

A semiring $\Sr$ is called
\emph{complete star-omega semiring} if
$(S,S)$ is a complete semiring-semimodule pair.

For the theory of infinite words and finite automata accepting infinite words by the B\"uchi condition consult Perrin, Pin \cite{PerPin}.

\section{Pushdown transition matrices} \label{sec:2}

In this section we introduce pushdown transition matrices
and study their properties.
Our first theorem generalizes Theorem 10.5 of
Kuich, Salomaa \cite{88}.
Then we show in Theorems~\ref{thm:3.4}~and~\ref{thm:3.8} that, for a pushdown transition matrix $M$,
$(M^\omega)_p$ and  $(\Moml)_p$, $0 \leq l \leq n$, $p \in \Gamma$, introduced below satisfy the
same specific equality.
In Theorem \ref{thm:1}, $\Sr$ denotes a complete starsemiring; afterwards in this section, $(S,V)$ denotes a complete semiring-semimodule pair.

Following Kuich, Salomaa \cite{88} and Kuich \cite{78}, we introduce
pushdown transitions matrices.
Let $\Gamma$ be an alphabet,
called \emph{pushdown alphabet} and let $n\geq 1$.
A matrix $M \in \Snngam$ is termed
a \emph{pushdown transition matrix} (with \emph{pushdown alphabet} $\Gamma$ and \emph{stateset} $\{1,\dots,n\}$) if
\begin{itemize}
        \item[(i)] for each $p \in \Gamma$ there exist only finitely many blocks $M_{p,\pi}$, $\pi \in \Gamma^*$, that are unequal to $0$;
        \item[(ii)] for all $\pi_1, \pi_2 \in \Gamma^*$,
                \begin{equation*}
                M_{\pi_1,\pi_2} = \left\{
                        \begin{array}{ll}
                        M_{p,\pi} & \hspace{0,2cm} \text{if there exist } p \in \Gamma, \pi,\pi' \in \Gamma^* \text{ with } \pi_1 = p\pi' \text{ and } \pi_2 = \pi \pi', \\
                        0 & \hspace{0,2cm} \text{otherwise.}
                        \end{array}
                        \right.
                \end{equation*}
\end{itemize}

For the remaining of this paper, $M \in \Snngam$
will denote a pushdown transition matrix with pushdown alphabet $\Gamma$ and stateset $\{1, \dots, n\}$.

Our first theorem generalizes Theorem 10.5 of Kuich, Salomaa \cite{88} and Theorem 6.2 of Kuich \cite{78} to complete starsemirings.
First observe that for all $\rho_1 \in \Gamma^+$, $\rho_2, \pi \in \Gamma^*$, we have $M_{\rho_1\pi, \rho_2\pi}= M_{\rho_1,\rho_2}$.

Intuitively, our next theorem states that, emptying the pushdown tape with contents $p\pi$ by finite computations has the same effect (i.e., $(M^*)_{p\pi,\epsilon}$) as emptying first the pushdown tape with contents $p$ (i.e., $(M^*)_{p,\epsilon}$) by finite computations and afterwards (i.e., multiplying) emptying the pushdown tape with contents $\pi$ (i.e., $(M^*)_{\pi,\epsilon}$) by finite computations.

\begin{theorem}\label{thm:1}
        Let $\Sr$ be a complete starsemiring and $M\in \Snngam$ be a pushdown transition matrix.
        Then, for all $p \in \Gamma$ and $\pi \in \Gamma^*$,
        \begin{equation*}
                (M^*)_{p\pi, \epsilon} = (M^*)_{p,\epsilon}(M^*)_{\pi,\epsilon}\, .
        \end{equation*}
\end{theorem}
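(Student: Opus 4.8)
The plan is to prove the identity by expanding the star operation as an infinite sum and tracking how the pushdown structure forces computations to factor. Recall $(M^*)_{p\pi,\epsilon} = \sum_{k\geq 0} (M^k)_{p\pi,\epsilon}$, so the entry records all finite computations that reduce the stack contents $p\pi$ down to the empty stack $\epsilon$. The key structural intuition, spelled out in the remark before the statement, is that to empty $p\pi$ one must first empty $p$ (possibly pushing and popping auxiliary symbols on top of the still-present suffix $\pi$) and only afterwards empty $\pi$. My goal is to turn this intuition into an algebraic factorization of the matrix powers.

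The main tool I would use is the observation stated just before the theorem: for all $\rho_1 \in \Gamma^+$, $\rho_2,\pi \in \Gamma^*$ we have $M_{\rho_1\pi,\rho_2\pi} = M_{\rho_1,\rho_2}$, i.e. a nonempty stack prefix can be shifted together with a common suffix without changing the transition weight. I would first establish a matrix-power analogue, namely that for any $k$ and any $\rho \in \Gamma^+$, the entries $(M^k)_{\rho\pi,\sigma}$ split according to whether the computation ever touches the suffix $\pi$. Concretely I expect to prove, by induction on $k$, a lemma of the form
\begin{equation*}
(M^k)_{p\pi,\epsilon} = \sum_{k_1 + k_2 = k} (M^{k_1})_{p,\epsilon}\,(M^{k_2})_{\pi,\epsilon}\,,
\end{equation*}
which says that any length-$k$ emptying of $p\pi$ factors uniquely into a length-$k_1$ emptying of $p$ (carried out with $\pi$ untouched at the bottom, using the shift invariance above) followed by a length-$k_2$ emptying of $\pi$. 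Summing this identity over all $k\geq 0$ and applying the Cauchy-product form of the completeness axioms then yields $(M^*)_{p\pi,\epsilon} = (M^*)_{p,\epsilon}(M^*)_{\pi,\epsilon}$, which is exactly the claim.

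The inductive step is where the real work lies. The subtle point is that a computation emptying $p\pi$ does \emph{not} decompose at an arbitrary index; it decomposes at the unique first moment when the stack height drops to the level of $\pi$, i.e. when the original top symbol $p$ (and everything pushed above it) has been consumed but $\pi$ is still wholly intact. I would make this precise by grouping the length-$k$ paths according to this first-return time, using property (ii) of pushdown transition matrices to guarantee that until that moment every intermediate stack content has the form $\rho\pi$ with $\rho \in \Gamma^+$, so that the shift invariance lets me replace each factor $M_{\rho\pi,\rho'\pi}$ by $M_{\rho,\rho'}$ and thereby recognize the initial segment as a genuine emptying computation for $p$ alone.

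\textbf{The main obstacle} I anticipate is the bookkeeping in this first-return decomposition: I must argue that the stack level cannot drop below $\pi$ and then rise back during the first phase (so the split point is well defined and unique), and that after the split the remaining computation is an honest emptying of $\pi$ independent of the prefix history. Handling the empty-suffix base case $\pi = \epsilon$ and the possibility $k_1 = 0$ or $k_2 = 0$ correctly, together with verifying that the completeness axioms legitimately convert the sum over $k$ of these finite convolutions into the product of the two stars, will require care but should present no conceptual difficulty once the first-return lemma is in place.
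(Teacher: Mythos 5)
Your proposal is correct and rests on the same two pillars as the paper's own proof: the decomposition of an emptying run of $p\pi$ at the unique first moment the stack contents shrink to exactly $\pi$ (with property (ii) guaranteeing that until then all contents have the form $\rho\pi$, $\rho \in \Gamma^+$), together with the shift invariance $M_{\rho_1\pi,\rho_2\pi} = M_{\rho_1,\rho_2}$. The only difference is bookkeeping: you fix the run length $k$, prove the convolution identity $(M^k)_{p\pi,\epsilon} = \sum_{k_1+k_2=k}(M^{k_1})_{p,\epsilon}(M^{k_2})_{\pi,\epsilon}$ by induction, and then sum over $k$, whereas the paper rearranges the single infinite sum over all runs directly into the product of the two stars; both versions invoke the completeness axioms at the end in the same way.
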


\begin{proof}
Since the case $\pi = \epsilon$ is trivial, we assume $\pi \in \Gamma^+$.
We obtain
\begin{align*}
(M^*)_{p\pi,\epsilon} & = \sum_{m\geq 0} (M^{m+1})_{p\pi,\epsilon}\\
& =  \sum_{m \geq 0} \sum_{\pi_1, \dots, \pi_m \in \Gamma^+}
M_{p\pi,\pi_1} M_{\pi_1,\pi_2} \dots M_{\pi_{m-1},\pi_m} M_{\pi_m, \epsilon} \\
& =  \Big( \sum_{m_1 \geq 0} \!\!\!\!\!
        \sum_{\ \ \ \rho_1, \dots, \rho_{m_1} \in \Gamma^+}
        \! \! \! \! \! M_{p\pi, \rho_1\pi} \dots M_{\rho_{m_1}\pi,\pi} \Big) \cdot
        \Big( \sum_{m_2 \geq 0} \! \! \! \! \! \sum_{\ \ \ \pi_1, \dots, \pi_{m_2} \in \Gamma^+} \! \! \! \!\! \!
        M_{\pi,\pi_1} \dots M_{\pi_{m_2},\epsilon} \Big)\\
& = \Big( \sum_{m_1 \geq 0} \sum_{\, \rho_1, \dots, \rho_{m_1} \in \Gamma^+} \!
        M_{p,\rho_1} \dots M_{\rho_{m_1}, \epsilon}\Big) (M^*)_{\pi, \epsilon} =
        (M^*)_{p,\epsilon} (M^*)_{\pi,\epsilon} \, .
\end{align*}

The summand for $m=0$ is $M_{p\pi, \epsilon}$;
the summand for $m_1 = 0$ is $M_{p\pi, \pi}$ or $M_{p,\epsilon}$;
the summand for $m_2=0$ is $M_{\pi,\epsilon}$.
In the third line in the first factor the pushdown contents are always of the form $\rho \pi,$ $\rho \in \Gamma^+$, except for the last move.
Hence, in the second factor the first move has to start with pushdown contents $\pi$ and it is the first time that the leftmost symbol of $\pi$ is read.
\end{proof}

Intuitively, the next lemma states that the infinite computations starting with $p_1 \dots p_k$ on the pushdown tape yield the same matrix $(M^\omega)_{p_1 \dots p_k}$ as summing up, for all $1\leq j\leq k$ the product of $(M^*)_{p_1 \dots p_{j-1},\epsilon}$ (i.e., emptying the pushdown tape with contents $p_1 \dots p_{j-1}$ by finite computations) with the matrix $(M^\omega)_{p_j}$ (i.e., the infinite computations starting with $p_j$ on the pushdown tape).

This means that in $p_1\dots p_k$ the pushdown symbols $p_1,\dots,p_{j-1}$ are emptied by finite computations and $p_j$ is chosen for starting the infinite computations. Clearly, $p_{j+1},\dots,p_k$ are not read.

\begin{lemma}\label{lem:3.3}
Let $(S,V)$ be a complete semiring-semimodule pair and let $M \in \Snngam$	be a pushdown transition matrix.
Then for all $p_1, \dots, p_k \in \Gamma$,
\begin{equation*}
(M^\omega)_{p_1 \dots p_k} = \sum_{1\leq j \leq k} (M^*)_{p_1, \dots, p_{j-1}, \epsilon} (M^\omega)_{p_j} \, .
\end{equation*}
\end{lemma}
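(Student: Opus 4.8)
The plan is to reduce the $k$-symbol statement to the single \emph{one-step unfolding} identity
\begin{equation*}
(M^\omega)_{p\sigma} = (M^\omega)_p + (M^*)_{p,\epsilon}\,(M^\omega)_\sigma \qquad (p\in\Gamma,\ \sigma\in\Gamma^*),
\end{equation*}
and then induct on $k$. In the base case $k=1$ the right-hand side of the lemma is $(M^*)_{\epsilon,\epsilon}(M^\omega)_{p_1}$; since (ii) forces $M_{\epsilon,\pi}=0$ for all $\pi$ (the word $\epsilon$ has no factorization $p\pi'$), we get $(M^*)_{\epsilon,\epsilon}=I$, so both sides equal $(M^\omega)_{p_1}$. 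For the inductive step I would apply the one-step identity with $p=p_1$ and $\sigma=p_2\cdots p_k$, substitute the induction hypothesis for $(M^\omega)_{p_2\cdots p_k}$, and collapse each product $(M^*)_{p_1,\epsilon}(M^*)_{p_2\cdots p_{j-1},\epsilon}$ into $(M^*)_{p_1\cdots p_{j-1},\epsilon}$ by a direct application of Theorem~\ref{thm:1}. The leading $(M^\omega)_{p_1}$ term then supplies the $j=1$ summand via $(M^*)_{\epsilon,\epsilon}=I$.

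The heart of the argument is thus the one-step identity, which I would establish from the path-sum description of the omega power: $(M^\omega)_{\tau_0}$ is the sum over all infinite sequences $\tau_0,\tau_1,\tau_2,\dots$ of stack contents of the omega-products $\prod_{i\ge0}M_{\tau_i,\tau_{i+1}}$. Starting from $\tau_0=p\sigma$, the decisive structural fact (recorded just before Theorem~\ref{thm:1}) is $M_{\rho\sigma,\rho'\sigma}=M_{\rho,\rho'}$ for $\rho\in\Gamma^+$, together with the observation that a move rewrites only the topmost symbol. Consequently, before the stack first becomes exactly $\sigma$, every intermediate content lies in $\Gamma^+\sigma$ and the bottom copy of $\sigma$ stays untouched.

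I would then partition the infinite paths from $p\sigma$ by the first index $m$ with $\tau_m=\sigma$, allowing $m=\infty$. The $m=\infty$ paths never expose $\sigma$, so they remain in $\Gamma^+\sigma$ forever; stripping the trailing $\sigma$ and using $M_{\rho\sigma,\rho'\sigma}=M_{\rho,\rho'}$ puts them in weight-preserving bijection with the infinite paths from $p$, whence this block sums to $(M^\omega)_p$. For finite $m$, the prefix $\tau_0,\dots,\tau_m$ is a first passage from $p\sigma$ to $\sigma$ through $\Gamma^+\sigma$; stripping $\sigma$ turns it into a path $p\to\epsilon$, and because $M_{\epsilon,\cdot}=0$ makes $\epsilon$ absorbing, \emph{every} nonzero-weight path $p\to\epsilon$ is automatically a first passage, so summing the prefixes over all $m\ge1$ yields exactly $(M^*)_{p,\epsilon}$. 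The suffix $\tau_m=\sigma,\tau_{m+1},\dots$ ranges over all infinite paths from $\sigma$ and sums to $(M^\omega)_\sigma$. Combining the two blocks gives the one-step identity.

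I expect the main obstacle to be the rigor of this last decomposition inside a complete semiring-semimodule pair rather than any new idea. Concretely, I must justify (a) regrouping an infinite family of omega-products according to the partition by first-passage time, which relies on the infinite distributivity and associativity laws of the pair, and (b) factoring the omega-product of a path at the cut point $m$ as a finite semiring weight times an omega semimodule weight, i.e.\ $\prod_{i\ge0}s_i=(s_0\cdots s_{m-1})\prod_{i\ge m}s_i$, which is one of the defining product laws. Granting these standard laws, the weight-preserving bijections obtained by adjoining or deleting the trailing $\sigma$ make both blocks evaluate as claimed. A fully algebraic alternative would derive the one-step identity from the block-omega and sum-omega identities of \'Esik--Kuich, but I would keep the path-sum argument as the conceptual backbone, in line with the intuitive description preceding the lemma.
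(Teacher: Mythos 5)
Your proof is correct, but it is organized differently from the paper's. The paper proves the identity for all $k$ in one shot: it writes $(M^\omega)_{p_1\cdots p_k}$ as a sum over infinite runs and partitions the runs into classes --- class (1), where $p_2$ is never exposed, and classes $(j).(t)$, where the stack first becomes exactly $p_j\cdots p_k$ at step $t$ and $p_{j+1}$ is never exposed afterwards --- then evaluates each class sum, obtaining $(M^\omega)_{p_1}+\sum_{2\le j\le k}\sum_{t\ge 1}(M^t)_{p_1\cdots p_{j-1},\epsilon}(M^\omega)_{p_j}$. Notably, the paper never invokes Theorem~\ref{thm:1} inside this proof, since the multi-symbol stars $(M^*)_{p_1\cdots p_{j-1},\epsilon}$ arise directly as $\sum_{t\ge 1}(M^t)_{p_1\cdots p_{j-1},\epsilon}$. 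Your argument is the modular counterpart: your two-block first-passage decomposition proving $(M^\omega)_{p\sigma}=(M^\omega)_p+(M^*)_{p,\epsilon}(M^\omega)_\sigma$ is the paper's partition collapsed to the coarser dichotomy ``$\sigma$ never exposed / first exposed at step $m$'', and the finer information the paper packs into its classes is recovered by induction on $k$, at the price of invoking Theorem~\ref{thm:1} to recombine $(M^*)_{p_1,\epsilon}(M^*)_{p_2\cdots p_{j-1},\epsilon}$ into $(M^*)_{p_1\cdots p_{j-1},\epsilon}$. What your route buys is that exhaustiveness and disjointness of the decomposition are easier to check (two blocks, rather than a family of classes whose pairwise disjointness the paper only asserts), and you make explicit two points the paper leaves tacit: $(M^*)_{\epsilon,\epsilon}=I$ and the absorbing role of $\epsilon$ (so that every nonzero-weight path to $\epsilon$ is automatically a first passage). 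What it costs is the extra dependence on Theorem~\ref{thm:1} and an induction the paper avoids. Both proofs ultimately rest on the same laws of complete semiring-semimodule pairs that you correctly flag as the crux: regrouping infinite sums along a partition or weight-preserving bijection, and splitting an omega-product at a finite cut, $\prod_{i\ge 0}s_i=(s_0\cdots s_{m-1})\prod_{i\ge m}s_i$; so your argument meets the paper's own standard of rigor.
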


\begin{proof}
        \begin{equation*}
        (M^\omega)_{p_1, \dots, p_k} = \sum_{{\rho_1, \rho_2, \dots} \in \Gamma^+} M_{p_1\dots p_k, \rho_1} M_{\rho_1,\rho_2}M_{\rho_2, \rho_3} \dots \, .
        \end{equation*}
        We partition the ``runs'' $(p_1 \dots p_k, \rho_1, \rho_2, \rho_3,\dots)$ into classes:
        \begin{itemize}
        \item class (1):
                there exist $\rho_i' \in \Gamma^+$, $i\geq 1$, such that $\rho_i = \rho_i'p_2\dots p_k$.
        \item class (j).(t), $k\geq 3$, $2 \leq j \leq k-1$, $t \geq 1$: $\rho_t=p_j \dots p_k$ and
                there exist $\rho_i' \in \Gamma^+$, for $1 \leq i \leq t-1$ and $i\geq t+1$, such that
                $\rho_i = \rho_i'p_j\dots p_k$ for
                $1 \leq i \leq t-1$, and $\rho_i = \rho_i' p_{j+1} \dots p_k$ for $i \geq t+1$.
        \item class (k).(t), $k \geq 2$, $t \geq 1$:
                $\rho_t = p_k$ and
                there exist $\rho_i' \in \Gamma^+$ for $1\leq i \leq t-1$, such that
                $\rho_i = \rho_i' p_k$.
        \end{itemize}
        Clearly, class (1) and class (j).(t), $2 \leq j \leq k$, $t\geq 1$ are pairwise disjoint.

        Intuitively, in the runs of \\
    class (1): $p_2$ is never read;\\
        class (j).(t), $2 \leq j \leq k-1$, $t\geq 1$: $p_{j+1}$
                is never read and $p_j$ is read in the $t$-th step;\\
        class (k).(t), $t \geq 1$: $p_k$ is read
                in the $t$-th step.\\

        We now compute for each class the value of
        \begin{equation*}
        S(1) = \sum_{(1)} M_{p_1 \dots p_k, \rho_1} M_{\rho_1, \rho_2}
                M_{\rho_2, \rho_3} \dots \, .
        \end{equation*}
        and
        \begin{equation*}
        S(j).(t) = \sum_{(j).(t)} M_{p_1 \dots p_k, \rho_1} M_{\rho_1\rho_2}
                M_{\rho_2, \rho_3} \dots, 2 \leq j \leq k, t \geq 1,
        \end{equation*}
        where $\sum_{(1)}$ and $\sum_{(j).(t)}$ means summation over all runs
        in the classes (1) and (j).(t), respectively.
        We obtain
        \begin{equation*}
        S(1) = \sum_{\rho_1', \rho_2', \dots \in \Gamma^+} M_{p_1, \rho_1'}
                M_{\rho_1', \rho_2'} M_{\rho_2', \rho_3'} \dots = (M^\omega)_{p_1} \, .
        \end{equation*}
        For $2 \leq j \leq k-1$, $t \geq 1$, we obtain
        \begin{align*}
        S(j).(t) = \ &\Big( \sum_{  \rho_1',\rho_2', \dots, \rho'_{t-1} \in \Gamma^+} \! \!
                M_{p_1 \dots p_{j-1}, \rho_1'} \dots M_{\rho'_{t-2}, \rho'_{t-1}}
                M_{\rho'_{t-1},\epsilon} \Big) \cdot \Big( \sum_{{\rho'_{t+1},\rho'_{t+2}, \dots} \in \Gamma^+} \!\!\!
                M_{p_j,\rho_{t+1}'} M_{\rho'_{t+1},\rho'_{t+2}}\dots \Big)\\
                = \ & (M^t)_{p_1\dots p_{j-1},\epsilon} (M^\omega)_{p_j} \, .
        \end{align*}
        For $t \geq 1$,
        \begin{align*}
        S(k).(t) =  &\ \Big( \sum_{  \rho_1',\rho_2', \dots, \rho'_{t-1} \in \Gamma^+} \! \!
        M_{p_1 \dots p_{k-1}, \rho_1'} \dots M_{\rho'_{t-2}, \rho'_{t-1}}
        M_{\rho'_{t-1},\epsilon} \Big) \cdot\Big(\sum_{{\rho_{t+1},\rho_{t+2}, \dots} \in \Gamma^+} \!\!\!
                M_{p_k,\rho_{t+1}} M_{\rho_{t+1},\rho_{t+2}}\dots \Big) \\
                = & \ (M^t)_{p_1\dots p_{k-1},\epsilon} (M^\omega)_{p_k} \, .
        \end{align*}
        Hence, we obtain
        \begin{align*}
        (M^\omega)_{p_1 \dots p_k} & = S(1) + \sum_{2 \leq j \leq k} \sum_{t \geq 1} S(j).(t) = (M^\omega)_{p_1} + \sum_{2 \leq j \leq k} (M^*)_{p_1 \dots p_{j-1},\epsilon} (M^\omega)_{p_j} \\
        & = \sum_{1 \leq j \leq k} (M^\star)_{p_1 \dots p_{j-1}, \epsilon}(M^\omega)_{p_j} \, .\qedhere
        \end{align*}
\end{proof}

Intuitively, our next theorem states that the infinite computations starting with $p$ on the pushdown tape yield the same matrix $(M^\omega)_p$ as summing up, for all $\pi=p_1\dots p_k$ and all $1\leq j\leq k$ the product of $M_{p,\pi}$ (i.e., changing the contents of the pushdown tape from $p$ to $\pi$) with the matrix $(M^*)_{p_1\dots p_{j-1},\epsilon}$ (i.e., emptying the pushdown tape with contents $p_1\dots p_{j-1}$ by finite computations) and eventually with the matrix $(M^\omega)_{p_j}$ (i.e., the infinite computations starting with $p_j$ on the pushdown tape).

This means that in $\pi$ the pushdown symbols $p_1,\dots,p_{j-1}$ are emptied by finite computations and $p_j$ is chosen for starting the infinite computations. Clearly, $p_{j+1}, \dots, p_k$ are not read.

\begin{theorem} \label{thm:3.4}
        Let $(S,V)$ be a complete semiring-semimodule pair and let
        $M \in \Snngam$ be a pushdown transition matrix.
        Then, for all $p \in \Gamma$,
        \begin{equation*}
        (M^\omega)_p = \sum_{p_1 \dots p_k \in \Gamma^+}
        M_{p, p_1\dots p_k} \sum_{1 \leq j \leq k}
        (M^*)_{p_1 \dots p_{j-1}, \epsilon} (M^\omega)_{p_j} \, .
        \end{equation*}
\end{theorem}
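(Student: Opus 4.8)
The plan is to reduce the statement to Lemma~\ref{lem:3.3} by peeling off the very first move of each infinite computation. Recall from the proof of Lemma~\ref{lem:3.3} (specialised to a single starting symbol) that the omega power of $M$ admits, blockwise, the infinite-product expansion
\begin{equation*}
(M^\omega)_p = \sum_{\rho_1, \rho_2, \dots \in \Gamma^+} M_{p, \rho_1} M_{\rho_1, \rho_2} M_{\rho_2, \rho_3} \dots \, ,
\end{equation*}
where the intermediate pushdown contents $\rho_i$ range over $\Gamma^+$ because an infinite computation never empties the tape. This explains why the outer summation in the claimed identity runs over $\Gamma^+$.

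First I would isolate the factor $M_{p,\rho_1}$ contributed by the first move. Using the infinite distributive laws of the complete semiring-semimodule pair $(S,V)$, I would rewrite the expansion as
\begin{equation*}
(M^\omega)_p = \sum_{\rho_1 \in \Gamma^+} M_{p, \rho_1} \Big( \sum_{\rho_2, \rho_3, \dots \in \Gamma^+} M_{\rho_1, \rho_2} M_{\rho_2, \rho_3} \dots \Big) = \sum_{\rho_1 \in \Gamma^+} M_{p, \rho_1} (M^\omega)_{\rho_1} \, ,
\end{equation*}
the inner product being, by the same expansion, exactly $(M^\omega)_{\rho_1}$. Property (i) of pushdown transition matrices guarantees that only finitely many summands $M_{p,\rho_1}$ are nonzero, so the outer sum is finite and this step is harmless from the summation point of view. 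Intuitively this is nothing but the fixpoint identity $M^\omega = M M^\omega$ read off on the $p$-row.

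Next I would write $\rho_1 = p_1 \dots p_k$ and apply Lemma~\ref{lem:3.3} to each block $(M^\omega)_{p_1 \dots p_k}$, replacing it by $\sum_{1 \leq j \leq k} (M^*)_{p_1 \dots p_{j-1}, \epsilon}(M^\omega)_{p_j}$. Substituting this into the previous display yields
\begin{equation*}
(M^\omega)_p = \sum_{p_1 \dots p_k \in \Gamma^+} M_{p, p_1 \dots p_k} \sum_{1 \leq j \leq k} (M^*)_{p_1 \dots p_{j-1}, \epsilon} (M^\omega)_{p_j} \, ,
\end{equation*}
which is precisely the assertion.

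The step I expect to be the main obstacle is the first rearrangement: justifying that the leading block $M_{p,\rho_1}$ may be pulled out of the infinite product and that the remaining tail is again an omega power $(M^\omega)_{\rho_1}$. The finiteness furnished by property (i) tames the outer sum, but separating the leading scalar block from the $\omega$-product of the trailing blocks is not a finite manipulation; it rests squarely on the infinite distributivity axioms for complete semiring-semimodule pairs, so I would be careful to invoke the appropriate ``infinite'' distributive law rather than treating it as a routine reindexing.
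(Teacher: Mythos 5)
Your proof is correct and is essentially the paper's own argument read in the opposite direction: the paper starts from the right-hand side, collapses the inner sums via Lemma~\ref{lem:3.3} to get $\sum_{\pi} M_{p,\pi}(M^\omega)_\pi$, and finishes with $(MM^\omega)_p = (M^\omega)_p$, whereas you start from $(M^\omega)_p$, apply the fixpoint identity $M^\omega = MM^\omega$ blockwise to peel off the first move, and then invoke Lemma~\ref{lem:3.3}. The step you flag as the main obstacle is exactly the $p$-block of $MM^\omega = M^\omega$, which the paper uses without further comment, so your extra care there is sound but not a departure from the paper's route.
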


\begin{proof}
        We obtain, by Lemma \ref{lem:3.3}
        \begin{align*}
        \sum_{p_1 \dots p_k \in \Gamma^+}
        M_{p, p_1\dots p_k} \sum_{1 \leq j \leq k}
        (M^*)_{p_1 \dots p_{j-1}, \epsilon} (M^\omega)_{p_j}
        & =  \sum_{p_1 \dots p_k \in \Gamma^+}
        M_{p, p_1\dots p_k} (M^\omega)_{p_1\dots p_k} \\
        & = \sum_{\pi \in \Gamma^*} M_{p,\pi}(M^\omega)_\pi
        = (MM^\omega)= M^\omega \, .
        \end{align*}
\end{proof}

We define the matrices $(A_M)_{p,p'} \in \Snn$,
$M \in \Snngam$ a pushdown transition matrix,
$p,p' \in \Gamma$, by
\begin{equation*}
(A_M)_{p,p'} = \!\!\!\! \sum_{\substack{\pi=p_1 \dots p_k \in \Gamma^+\\
                p_j=p'}} M_{p,\pi} (M^*)_{p_1, \epsilon}\dots(M^*)_{p_{j-1},\epsilon} \, ,
\end{equation*}
and $A_M \in (\Snn)^{\Gamma \times \Gamma} $ by
$A_M = ((A_M)_{p,p'})_{p,p'\in \Gamma}$.
Whenever we use the notation $A_M$ we mean the matrix just defined.

\begin{theorem}\label{thm:3.5}
                Let $(S,V)$ be a complete semiring-semimodule pair and let
                $M \in \Snngam$ be a pushdown transition matrix.
                Then, for all $p \in \Gamma$,
                \begin{equation*}
                (M^\omega)_p = \sum_{p' \in \Gamma} (A_M)_{p,p'}
                        (M^\omega)_{p'}\, .
                \end{equation*}
\end{theorem}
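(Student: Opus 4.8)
The plan is to derive the identity directly from Theorem~\ref{thm:3.4} by rewriting the finite ``emptying'' factor as a product of single-symbol stars and then regrouping the resulting sum according to the chosen symbol $p_j$.

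First I would start from the expansion supplied by Theorem~\ref{thm:3.4},
\[
(M^\omega)_p = \sum_{p_1 \dots p_k \in \Gamma^+} M_{p, p_1 \dots p_k} \sum_{1 \leq j \leq k} (M^*)_{p_1 \dots p_{j-1}, \epsilon} (M^\omega)_{p_j}.
\]
Applying Theorem~\ref{thm:1} repeatedly (an immediate induction on the length of the prefix) factors each star of a prefix into single-symbol stars,
\[
(M^*)_{p_1 \dots p_{j-1}, \epsilon} = (M^*)_{p_1, \epsilon}\dots(M^*)_{p_{j-1}, \epsilon},
\]
where the empty product corresponding to $j=1$ is read as the identity matrix.

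The key step is then to reorganize the double summation over words $\pi = p_1 \dots p_k \in \Gamma^+$ and positions $1 \leq j \leq k$ by grouping the summands according to the value $p' = p_j \in \Gamma$. Since $(M^\omega)_{p_j} = (M^\omega)_{p'}$ is constant within each group, the infinite distributive laws of the complete semiring-semimodule pair $(S,V)$ let me factor $(M^\omega)_{p'}$ out to the right, giving
\[
(M^\omega)_p = \sum_{p' \in \Gamma} \Big( \sum_{\substack{\pi = p_1 \dots p_k \in \Gamma^+ \\ p_j = p'}} M_{p,\pi} (M^*)_{p_1, \epsilon}\dots(M^*)_{p_{j-1}, \epsilon} \Big)(M^\omega)_{p'}.
\]
The inner parenthesized expression is exactly the defining sum of $(A_M)_{p,p'}$, which yields the claimed equality.

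The only delicate point I anticipate is this regrouping together with the right-factoring of $(M^\omega)_{p'}$: the index set ranges over all of $\Gamma^+$ (with the inner position variable $j$), and the factored element lives in the semimodule rather than the semiring, so the completeness of $(S,V)$ and its infinite distributive laws are precisely what make the rearrangement legitimate. The iterated use of Theorem~\ref{thm:1} is routine once associativity of finite matrix products is invoked.
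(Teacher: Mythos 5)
Your proof is correct and follows essentially the same route as the paper: both arguments combine Theorem~\ref{thm:3.4} with the definition of $A_M$ and a regrouping of the double sum over $(\pi, j)$ according to $p' = p_j$, the only cosmetic differences being that you run the computation from $(M^\omega)_p$ toward the right-hand side while the paper expands $\sum_{p'} (A_M)_{p,p'}(M^\omega)_{p'}$ back into the form given by Theorem~\ref{thm:3.4} (using a Kronecker delta to manage the regrouping), and that you make explicit the appeal to Theorem~\ref{thm:1} for $(M^*)_{p_1 \dots p_{j-1},\epsilon} = (M^*)_{p_1,\epsilon}\cdots(M^*)_{p_{j-1},\epsilon}$, which the paper uses silently.
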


\begin{proof}
        We obtain by Theorem \ref{thm:3.4}
        \begin{align*}
        \sum_{p' \in \Gamma} (A_M)_{p,p'}
        (M^\omega)_{p'} & = \sum_{p' \in \Gamma}
                \sum_{\pi= p_1 \dots p_k \in \Gamma^+}
                \sum_{1 \leq j \leq k}
                \delta_{p_j,p'}M_{p,\pi}(M^*)_{p_1\dots p_{j-1},\epsilon} (M^\omega)_{p'}\\
                & = \sum_{ p_1 \dots p_k \in \Gamma^+}
                M_{p,p_1\dots p_k} \sum_{1 \leq j \leq k}
                \sum_{p' \in \Gamma} \delta_{p_j,p'}
                (M^*)_{p_1\dots p_{j-1},\epsilon}(M^\omega)_{p'}\\
                & = \sum_{p_1 \dots p_k \in \Gamma^+}
                M_{p,p_1\dots p_k} \sum_{1 \leq j \leq k}
                (M^*)_{p_1\dots p_{j-1},\epsilon}(M^\omega)_{p_j}
                = (M^\omega)_p \, .
        \end{align*}
\end{proof}

When we say ``$G$ is the graph with adjacency matrix
$M \in (\Snn)^{\Gamma^* \times \Gamma^*}$'' then it means that $G$ is the graph with
adjacency matrix
$M' \in \Sr^{(\Gamma^*\times n)\times(\Gamma^*\times n)}$, where $M$
corresponds to $M'$ with respect to the canonical isomorphism
between $(\Snn)^{\Gamma^* \times \Gamma^*}$ and $\Sr^{(\Gamma^*\times n)\times(\Gamma^*\times n)}$.

Let now $M$ be a pushdown transition matrix and $0\leq l \leq n$.
Then $M^{\omega,l}$ is the column vector in $(V^n)^{\Gamma^*}$
defined as follows:
For $\pi \in \Gamma^*$ and $1 \leq i \leq n$,
let $((M^{\omega,l})_{\pi})_i$ be the sum of all weights of paths in the graph with
adjacency matrix $M$ that have initial vertex $(\pi,i)$ and visit vertices $(\pi',i')$, $\pi' \in \Gamma^*$, $1 \leq i' \leq l$, infinitely often.
Observe that $M^{\omega, 0}= 0$ and $M^{\omega, n}=M^\omega$.

Let $P_l = \{ (j_1, j_2, \dots) \in \{1, \dots, n\}^\omega \mid j_t \leq l \text{ for infinitely many } t \geq 1 \}$.

Then for $\pi \in \Gamma^+$, $1\leq j \leq n$, we obtain
\begin{equation*}
((M^{\omega,l})_\pi)_j =
\sum_{\pi_1,\pi_2,\dots \in \Gamma^+} \sum_{(j_1, j_2, \dots) \in P_l}
(M_{\pi,\pi_{1}})_{j,j_1}(M_{\pi_1,\pi_2})_{j_1,j_2}(M_{\pi_2,\pi_3})_{j_2,j_3} \dots \, .
\end{equation*}

%By Theorem 5.4.1 of \'Esik, Kuich \cite{MAT}, we obtain for a finite matrix $A \in (\Snn)^{\Gamma \times \Gamma}$ and for $p\in\Gamma$,
%$0 \leq l \leq n$, $1 \leq j \leq n$,
%
%\begin{equation*}
%((A^{\omega,l})_p)_{j} = \sum_{{p_1,p_2,\dots} \in \Gamma}\sum_{(j_1,j_2, \dots) \in P_l} (A_{p,p_1})_{j,j_1}
%(A_{p_1,p_2})_{j_1,j_2} \dots \, .
%\end{equation*}
%Observe that again $A^{\omega, 0}= 0$ and $A^{\omega, n}= A^\omega$.

\begin{lemma} \label{lem:3.7}
        Let $(S,V)$ be a complete semiring-semimodule pair and let
        $M \in \Snngam$ be a pushdown transition matrix.
        Then, for all $p_1, \dots, p_k \in \Gamma$,
        $0 \leq l \leq n$,
        \begin{equation*}
        (M^{\omega, l})_{p_1 \dots p_k} =
        \sum_{1 \leq j \leq k} (M^*)_{p_1 \dots p_{j-1},\epsilon} (M^{\omega,l})_{p_j}.
        \end{equation*}
\end{lemma}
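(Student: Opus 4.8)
The plan is to follow the proof of Lemma~\ref{lem:3.3} almost verbatim, since Lemma~\ref{lem:3.7} is its B\"uchi-restricted analogue: the vector $M^{\omega,l}$ differs from $M^\omega$ only by the acceptance condition that the state component visit $\{1,\dots,l\}$ infinitely often, i.e.\ that the state sequence lie in $P_l$. First I would expand $((M^{\omega,l})_{p_1\dots p_k})_j$ as the sum over all infinite runs $(p_1\dots p_k,\rho_1,\rho_2,\dots)$ with $\rho_i\in\Gamma^+$ whose state sequence belongs to $P_l$, exactly as in the displayed formula preceding the lemma.

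Next I would reuse the very same partition of these runs into the classes $(1)$ and $(j).(t)$, $2\le j\le k$, $t\ge1$, from the proof of Lemma~\ref{lem:3.3}. This partition is determined purely by the pushdown behaviour---which initial symbol $p_j$ carries the infinite computation and at which step $t$ the block $p_1\dots p_{j-1}$ is finally emptied---and is therefore unaffected by the extra B\"uchi condition. In each class I would split a run at step $t$ (at step $0$ for class $(1)$) into a finite prefix that empties $p_1\dots p_{j-1}$ and an infinite tail that runs on $p_j$ with $p_{j+1}\dots p_k$ frozen at the bottom, and use the distributivity laws of the complete semiring-semimodule pair to separate the sum over the connecting state. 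I expect to obtain
\begin{equation*}
S(1)=(M^{\omega,l})_{p_1},\qquad S(j).(t)=(M^t)_{p_1\dots p_{j-1},\epsilon}\,(M^{\omega,l})_{p_j},
\end{equation*}
the only change from Lemma~\ref{lem:3.3} being the superscript $l$ on the tail factor.

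The one point that genuinely needs justification---and which I expect to be the \emph{crux}---is why the tail factor is $(M^{\omega,l})_{p_j}$ rather than $(M^\omega)_{p_j}$. Here I would argue that membership in $P_l$ is a \emph{tail property}: a state sequence has infinitely many indices with value $\le l$ if and only if its suffix $(j_{t+1},j_{t+2},\dots)$ does, since the two differ in only finitely many terms. Consequently, after splitting at step $t$, the B\"uchi condition on the whole run decouples into \emph{no} constraint on the finite prefix and exactly the B\"uchi condition on the tail; the prefix then contributes $(M^t)_{p_1\dots p_{j-1},\epsilon}$ and the tail, starting from the connecting state, contributes $(M^{\omega,l})_{p_j}$.

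Finally I would assemble the pieces as in Lemma~\ref{lem:3.3}: summing $S(j).(t)$ over $t\ge1$ gives $\bigl(\sum_{t\ge1}(M^t)_{p_1\dots p_{j-1},\epsilon}\bigr)(M^{\omega,l})_{p_j}=(M^*)_{p_1\dots p_{j-1},\epsilon}(M^{\omega,l})_{p_j}$ (the $m=0$ term vanishing since $p_1\dots p_{j-1}\neq\epsilon$ for $j\ge2$), and adding the class~$(1)$ term $(M^{\omega,l})_{p_1}$ yields the claimed identity. No algebraic identity beyond those already used for Lemma~\ref{lem:3.3} is needed.
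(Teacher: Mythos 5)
Your proposal is correct and takes essentially the same route as the paper: the paper's proof likewise reuses the run partition from the proof of Lemma~\ref{lem:3.3} and invokes a summation identity stating that, for matrices $A_1,A_2,\dots$, summing $(A_1)_{j,j_1}(A_2)_{j_1,j_2}\dots$ over state sequences in $P_l$ equals summing the first $m$ factors over unconstrained states times the sum over tails in $P_l$ --- which is precisely your ``tail property'' justification for why the infinite factor becomes $(M^{\omega,l})_{p_j}$ while the prefix contributes $(M^t)_{p_1\dots p_{j-1},\epsilon}$.
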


\begin{proof}
        By the proof of Lemma
        \ref{lem:3.3} and the following summation identity:
Assume that $A_1, A_2, \dots$ are matrices in $\Snn$. Then, for $0 \leq l \leq n$, $1 \leq j \leq n$, and $m \geq 1$,
\begin{align*}
\sum_{(j_1,j_2, \dots) \in P_l} & (A_1)_{j,j_1}(A_2)_{j_1,j_2\dots} =
\sum_{1 \leq j_1, \dots, j_m \leq n} (A_1)_{j, j_1}\dots
(A_m)_{j_{m-1},j_m} \sum_{(j_{m+1},j_{m+2}, \dots)\in P_l}
(A_{m+1})_{j_m,j_{m+1}} \dots \, .
\end{align*}
\end{proof}

Theorem \ref{thm:3.8} generalizes Theorem \ref{thm:3.5} from $M^{\omega,n}$ to
$M^{\omega, l}$, $0 \leq l \leq n$.

\begin{theorem} \label{thm:3.8}
                Let $(S,V)$ be a complete semiring-semimodule pair and let
                $M \in \Snngam$ be a pushdown transition matrix.
                Then, for all $p \in \Gamma$, $0 \leq l \leq n$,
                \begin{equation*}
                (M^{\omega,l})_p = \sum_{p' \in \Gamma} (A_M)_{p,p'}
                (M^{\omega,l})_{p'}\, .
                \end{equation*}
\end{theorem}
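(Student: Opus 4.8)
The plan is to follow the proof of Theorem~\ref{thm:3.5} essentially verbatim, substituting $M^{\omega,l}$ for $M^\omega$ throughout and replacing the appeals to Lemma~\ref{lem:3.3} and Theorem~\ref{thm:3.4} by their $M^{\omega,l}$-analogues. The analogue of Lemma~\ref{lem:3.3} is already available as Lemma~\ref{lem:3.7}, so the only genuinely new ingredient is the $M^{\omega,l}$-version of Theorem~\ref{thm:3.4}, namely
\begin{equation*}
(M^{\omega,l})_p = \sum_{p_1 \dots p_k \in \Gamma^+} M_{p,p_1\dots p_k} \sum_{1 \leq j \leq k} (M^*)_{p_1\dots p_{j-1},\epsilon}(M^{\omega,l})_{p_j}\, ,
\end{equation*}
which I would establish first.

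To derive this identity I would rerun the computation of Theorem~\ref{thm:3.4} with Lemma~\ref{lem:3.7} in place of Lemma~\ref{lem:3.3}: the inner sum collapses to $(M^{\omega,l})_{p_1\dots p_k}$, and summing over $\pi = p_1\dots p_k \in \Gamma^+$ yields $\sum_{\pi \in \Gamma^*} M_{p,\pi}(M^{\omega,l})_\pi = (M M^{\omega,l})_p$. The step that needs verification --- and the one I expect to be the crux --- is the fixpoint identity $M M^{\omega,l} = M^{\omega,l}$, the $M^{\omega,l}$-counterpart of $MM^\omega = M^\omega$ used silently in Theorem~\ref{thm:3.4}. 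I would prove it directly from the path-sum definition of $M^{\omega,l}$ by splitting off the first step: for $\pi \in \Gamma^+$ and $1 \leq j \leq n$,
\begin{equation*}
((M^{\omega,l})_\pi)_j = \sum_{\pi_1 \in \Gamma^+} \sum_{1 \leq j_1 \leq n} (M_{\pi,\pi_1})_{j,j_1} \sum_{\pi_2,\pi_3,\dots \in \Gamma^+} \sum_{(j_2,j_3,\dots) \in P_l} (M_{\pi_1,\pi_2})_{j_1,j_2}(M_{\pi_2,\pi_3})_{j_2,j_3}\dots\, .
\end{equation*}
Here the key observation is that $P_l$ is invariant under dropping the first coordinate: a sequence $(j_1,j_2,\dots)$ lies in $P_l$ if and only if its shift $(j_2,j_3,\dots)$ does, since the condition ``$j_t \leq l$ for infinitely many $t$'' is unaffected by deleting one entry. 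Hence the inner triple sum equals $((M^{\omega,l})_{\pi_1})_{j_1}$, giving $((M^{\omega,l})_\pi)_j = ((M M^{\omega,l})_\pi)_j$; the contribution of $\pi_1 = \epsilon$ is absent because $(M^{\omega,l})_\epsilon = 0$ (there are no infinite runs on the empty pushdown), so the sum over $\Gamma^+$ may be freely extended to $\Gamma^*$.

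With the $M^{\omega,l}$-analogue of Theorem~\ref{thm:3.4} in hand, the proof concludes exactly as in Theorem~\ref{thm:3.5}: I would insert the definition of $(A_M)_{p,p'}$ into $\sum_{p' \in \Gamma}(A_M)_{p,p'}(M^{\omega,l})_{p'}$, rewrite the constraint $p_j = p'$ using the Kronecker symbol $\delta_{p_j,p'}$, interchange the summations, carry out $\sum_{p' \in \Gamma}\delta_{p_j,p'}(M^{\omega,l})_{p'} = (M^{\omega,l})_{p_j}$, and recognize the product $(M^*)_{p_1,\epsilon}\dots(M^*)_{p_{j-1},\epsilon} = (M^*)_{p_1\dots p_{j-1},\epsilon}$ via repeated application of Theorem~\ref{thm:1}. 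The resulting expression is precisely the right-hand side of the $M^{\omega,l}$-analogue of Theorem~\ref{thm:3.4}, which equals $(M^{\omega,l})_p$. Apart from the fixpoint identity above, every manipulation is a routine rearrangement of sums in the complete semiring-semimodule pair.
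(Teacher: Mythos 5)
Your proof is correct and is exactly the argument the paper intends: the paper states Theorem~\ref{thm:3.8} without proof, as the generalization of Theorem~\ref{thm:3.5} obtained by replacing Lemma~\ref{lem:3.3} with Lemma~\ref{lem:3.7}, which is precisely your route through the $M^{\omega,l}$-analogue of Theorem~\ref{thm:3.4}. Your explicit verification of the fixpoint identity $MM^{\omega,l}=M^{\omega,l}$ via the tail-invariance of $P_l$ (in effect the $m=1$ case of the summation identity used in the proof of Lemma~\ref{lem:3.7}) supplies a step the paper leaves silent, and it is sound.
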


\section{Algebraic systems and $\omega$-pushdown automata} \label{sec:3}

In this section, we define $\omega$-pushdown automata
and show that for an $\omega$-pushdown automaton $\Pcc$
there exists an algebraic system over a quemiring such that the behavior $\bhvr{\Pcc}$
of $\Pcc$ is a component of a solution of this system.

For the definition of an $\Sr'$-algebraic system over a quemiring  $\Sr \times V$
we refer the reader to \cite{MAT}, page 136, and for the definition of
quemirings to \cite{MAT}, page 110.
Here we note that a quemiring   $T$
is isomorphic to a quemiring $\Sr \times V$ determined by the semiring-semimodule pair  $(\Sr,V)$,
cf. \cite{MAT}, page 110.

In the sequel, $(S,V)$ is a complete semiring-semimodule pair and $\Sr'$ is a subset of $S$ containing $0$ and $1$.
Let $M \in \Spnngam$ be a pushdown matrix.
Consider the $\Spnn$-algebraic system over the complete semiring-semimodule pair $(\Snn,\Sn)$, i.e., over the quemiring $\Snn \times \Sn$,
\begin{equation}\label{sys:1}
y_p = \sum_{\pi \in \Gamma^*} M_{p, \pi} y_{\pi} \, , \, p \in \Gamma \, .
\end{equation}
(See Section 5.6 of \'Esik, Kuich \cite{MAT}.)
The variables of this system \eqref{sys:1} are
$y_p, p \in \Gamma$, and
$y_\pi, \pi \in \Gamma^*$, is defined by
$y_{p\pi}= y_p y_\pi $ for $p \in \Gamma$,
$\pi \in \Gamma^*$ and $y_\epsilon = \epsilon$.
Hence, for $\pi = p_1 \dots p_k$,
$y_\pi = y_{p_1} \dots y_{p_k}$.
The variables $y_p$ are variables for $(\Snn, \Sn)$.

Let $x = (x_p)_{p \in \Gamma}$, where
$x_p$, $p \in \Gamma$, are variables for $\Snn$.
Then, for $p \in \Gamma$, $\pi = p_1 p_2 \dots p_k$,
$(M_{p,\pi} y_\pi)_x$ is defined to be
\begin{equation*}
(M_{p,\pi}y_\pi)_x = (M_{p,\pi} y_{p_1}\dots y_{p_k})_x =
M_{p,\pi}z_{p_1} + M_{p,\pi}x_{p_1}z_{p_2} + \dots + M_{p,\pi} x_{p_1} \dots x_{p_{k-1}}z_{p_k}.
\end{equation*}
Here $z_p$, $p \in \Gamma$, are variables for $\Sn$.

We obtain, for $p \in \Gamma$, $\pi = p_1 \dots p_k$,
\begin{align*}
(M_{p, \pi} y_\pi)_x & =
\sum_{p'\in\Gamma}\sum_{\substack{\pi=p_1 \dots p_k \in \Gamma^+\\
                p_j=p'}} M_{p, \pi}
x_{p_1}\dots x_{p_{j-1}} z_{p'} \\
& = \sum_{\pi= p_1 \dots p_k \in \Gamma^+}
M_{p,\pi} \sum_{1 \leq j \leq k} x_{p_1}\dots
x_{p_{j-1}} z_{p_j} \, .
\end{align*}

The system \eqref{sys:1} induces the following
mixed $\omega$-algebraic system:

\begin{align}
x_p & = \sum_{\pi \in \Gamma^*} M_{p \pi} x_\pi \, , \, p \in \Gamma, \label{sys:2} \\
z_p & = \sum_{\pi\in\Gamma^*} (M_{p,\pi}y_\pi)_{(x_p)_{p \in \Gamma}} =
\sum_{p'\in\Gamma} \sum_{\substack{\pi=p_1 \dots p_k \in \Gamma^+\\
                p_j=p'}} M_{p, \pi}
x_{p_1}\dots x_{p_{j-1}} z_{p'} \, . \label{sys:3}
\end{align}

Here \eqref{sys:2} is an $\Spnn$-algebraic system
over the semiring $\Snn$ (see Section 2.3 of \'Esik, Kuich \cite{MAT}) and \eqref{sys:3} is an
$\Snn$-linear system over the semimodule $\Sn$
(see Section 5.5 of \'Esik, Kuich \cite{MAT}).

In the classical theory of automata and formal languages, equation \eqref{sys:2} plays a crucial role in the transition from pushdown automata to
context-free grammars. It is, in the form of matrix notation, the well-known triple construction.
(See Harrison \cite{Har}, Theorem 5.4.3;
Bucher, Maurer \cite{17}, S\"atze 2.3.10, 2.3.30; Kuich, Salomaa \cite{88}, pages 178, 306; Kuich \cite{78}, page 642;
\'Esik, Kuich \cite{MAT}, pages 77, 78.)

By Theorem 5.6.1 of \'Esik, Kuich \cite{MAT},
$(A,U) \in ((\Snn)^\Gamma,(\Sn)^\Gamma)$ is a solution of
\eqref{sys:1} iff $A$ is a solution of \eqref{sys:2} and $(A,U)$ is a solution of \eqref{sys:3}. We now compute such solutions
$(A,U)$.

\begin{theorem}\label{thm:9}
        Let $S$ be a complete starsemiring and
        $M \in \Spnngam$ be a pushdown transition matrix.
        Then $((M^*)_{p,\epsilon})_{p \in \Gamma}$
        is a solution of \eqref{sys:2}.
\end{theorem}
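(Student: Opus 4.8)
The plan is to verify directly that the assignment $x_p = (M^*)_{p,\epsilon}$, $p \in \Gamma$, is a fixed point of \eqref{sys:2}: I would substitute it into the right-hand side of \eqref{sys:2} and show the result collapses back to $(M^*)_{p,\epsilon}$. Since the entries of $M$ lie in $\Sr' \subseteq \Sr$, I may regard $M$ as a pushdown transition matrix over the complete starsemiring $\Sr$, so that Theorem~\ref{thm:1} is available.

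First I would unfold the term $x_\pi$ for a word $\pi = p_1 \cdots p_k \in \Gamma^+$. By the defining convention of the system, $x_\pi = x_{p_1}\cdots x_{p_k}$, which under the substitution becomes $(M^*)_{p_1,\epsilon}\cdots (M^*)_{p_k,\epsilon}$; an induction on $k$ using the identity $(M^*)_{p\pi,\epsilon} = (M^*)_{p,\epsilon}(M^*)_{\pi,\epsilon}$ of Theorem~\ref{thm:1} collapses this product to the single block $(M^*)_{\pi,\epsilon}$. For the boundary word $\pi = \epsilon$, the system sets $x_\epsilon$ to be the semiring unit, and this agrees with $(M^*)_{\epsilon,\epsilon}$, which equals the identity of $\Snn$: indeed $\epsilon$ cannot be written as $p\pi'$ with $p \in \Gamma$, so condition (ii) forces $M_{\epsilon,\pi_2}=0$ for all $\pi_2$, leaving only the $m=0$ term in $(M^*)_{\epsilon,\epsilon}$. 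Hence $x_\pi = (M^*)_{\pi,\epsilon}$ for every $\pi \in \Gamma^*$, and the substituted right-hand side of \eqref{sys:2} becomes $\sum_{\pi \in \Gamma^*} M_{p,\pi}(M^*)_{\pi,\epsilon}$, i.e. exactly the $(p,\epsilon)$-block of the product $MM^*$.

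To finish I would invoke the star fixed-point law $M^* = E + MM^*$, valid in the complete starsemiring $(\Snn)^{\Gamma^* \times \Gamma^*}$ with $E$ its identity, read off the $(p,\epsilon)$-block, and note that $E_{p,\epsilon} = 0$ because $p \in \Gamma$ forces $p \neq \epsilon$. Thus $(MM^*)_{p,\epsilon} = (M^*)_{p,\epsilon}$, which coincides with $x_p$, establishing the claim. The computations are routine; the one point that genuinely needs care is the $\pi = \epsilon$ boundary term, namely checking that $x_\epsilon$ matches $(M^*)_{\epsilon,\epsilon}$ so that the full sum over $\Gamma^*$ really equals $(MM^*)_{p,\epsilon}$ and lines up with the star identity. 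Everything else reduces to Theorem~\ref{thm:1} together with $M^* = E + MM^*$.
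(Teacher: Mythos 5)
Your proof is correct and is essentially the paper's argument: the paper's proof is simply ``By Theorem~\ref{thm:1}'', and your writeup is exactly the natural unpacking of that reference — iterate Theorem~\ref{thm:1} to collapse $x_\pi$ to $(M^*)_{\pi,\epsilon}$, check the $\pi=\epsilon$ boundary term, and finish with the fixed-point law $M^*=E+MM^*$ together with $E_{p,\epsilon}=0$. Nothing is missing; you have just made explicit the routine steps the paper leaves to the reader.
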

\begin{proof}
        By Theorem \ref{thm:1}.
\end{proof}

We now substitute in \eqref{sys:3} for $(x_p)_{p\in\Gamma}$ the solution
$((M^*)_{p,\epsilon})_{p \in \Gamma}$ of \eqref{sys:1}
and obtain the $\Spnn$-linear system
\eqref{sys:4} over the semimodule $\Sn$
\begin{equation}\label{sys:4}
z_p = \sum_{p' \in \Gamma} (A_M)_{p,p'} z_{p'}  ,\, p \in \Gamma \, .
\end{equation}

\begin{theorem}
        Let $(S,V)$ be a complete semiring-semimodule pair and
        $M \in \Spnngam$ be a pushdown transition matrix.
        Then, for all $0 \leq l \leq n$,
        $((M^{\omega,l})_p)_{p \in \Gamma}$ is a solution
        of \eqref{sys:4}.
\end{theorem}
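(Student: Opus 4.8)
The plan is to show that the vector $((M^{\omega,l})_p)_{p \in \Gamma}$ satisfies equation \eqref{sys:4}, namely $(M^{\omega,l})_p = \sum_{p' \in \Gamma} (A_M)_{p,p'} (M^{\omega,l})_{p'}$ for each $p \in \Gamma$. This is precisely the statement of Theorem~\ref{thm:3.8}, so the proof amounts to recognizing that Theorem~\ref{thm:3.8} is exactly the assertion that $((M^{\omega,l})_p)_{p \in \Gamma}$ solves \eqref{sys:4}, and invoking it directly. The only subtlety is the switch of semiring: Theorem~\ref{thm:3.8} is stated for a general pushdown transition matrix $M \in \Snngam$, whereas here $M \in \Spnngam$ and \eqref{sys:4} is an $\Spnn$-linear system, but since $\Sr'$ is just a subset of $S$, the matrix $A_M$ has the same entries and the identity is unchanged.

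First I would unfold the definition of $A_M$ given just before Theorem~\ref{thm:3.5}: for $p, p' \in \Gamma$,
\begin{equation*}
(A_M)_{p,p'} = \sum_{\substack{\pi = p_1 \dots p_k \in \Gamma^+ \\ p_j = p'}} M_{p,\pi} (M^*)_{p_1,\epsilon} \dots (M^*)_{p_{j-1},\epsilon}.
\end{equation*}
Using Theorem~\ref{thm:1} repeatedly, the product $(M^*)_{p_1,\epsilon} \dots (M^*)_{p_{j-1},\epsilon}$ collapses to $(M^*)_{p_1 \dots p_{j-1},\epsilon}$, which is the same simplification already exploited in the proof of Theorem~\ref{thm:3.5}. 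Substituting this into $\sum_{p' \in \Gamma} (A_M)_{p,p'} (M^{\omega,l})_{p'}$ and reorganizing the sums (exactly as in the three displayed lines of the proof of Theorem~\ref{thm:3.5}, but with $(M^\omega)_{p'}$ replaced by $(M^{\omega,l})_{p'}$) yields
\begin{equation*}
\sum_{p' \in \Gamma} (A_M)_{p,p'} (M^{\omega,l})_{p'} = \sum_{p_1 \dots p_k \in \Gamma^+} M_{p,p_1 \dots p_k} \sum_{1 \leq j \leq k} (M^*)_{p_1 \dots p_{j-1},\epsilon} (M^{\omega,l})_{p_j}.
\end{equation*}

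The remaining step is to recognize the right-hand side. By Lemma~\ref{lem:3.7}, the inner sum $\sum_{1 \leq j \leq k} (M^*)_{p_1 \dots p_{j-1},\epsilon} (M^{\omega,l})_{p_j}$ equals $(M^{\omega,l})_{p_1 \dots p_k}$. Hence the expression becomes $\sum_{\pi \in \Gamma^+} M_{p,\pi} (M^{\omega,l})_\pi$, which is the defining equation for $(M^{\omega,l})_p$ in the manner of Theorem~\ref{thm:3.4}; this last identification is the content of Theorem~\ref{thm:3.8}. I expect the main obstacle to be purely bookkeeping: justifying the interchange and regrouping of the nested infinite sums over $p' \in \Gamma$, over decompositions $\pi = p_1 \dots p_k$, and over $1 \leq j \leq k$. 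This is legitimate because $(S,V)$ is a complete semiring-semimodule pair, so the infinite distributive laws apply, but care is needed to match the Kronecker-delta bookkeeping $\delta_{p_j,p'}$ used in the proof of Theorem~\ref{thm:3.5} so that the sum over $p'$ correctly selects $p_j$. Since all the algebraic content is already packaged in Lemma~\ref{lem:3.7} and Theorem~\ref{thm:3.8}, the cleanest route is simply to cite Theorem~\ref{thm:3.8}.
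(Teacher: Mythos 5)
Your proposal is correct and takes exactly the paper's route: the paper's entire proof is ``By Theorem~\ref{thm:3.8}'', since equation~\eqref{sys:4} is literally the identity asserted in that theorem, with the observation about $M \in \Spnngam$ versus $\Snngam$ being immaterial because $\Sr' \subseteq \Sr$. Your additional unfolding of $A_M$ via Theorem~\ref{thm:1} and Lemma~\ref{lem:3.7} is harmless but unnecessary, as that work is already packaged inside Theorem~\ref{thm:3.8}.
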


\begin{proof}
        By Theorem \ref{thm:3.8}.
\end{proof}

\begin{corollary}\label{cor:11:4.3}
        Let $(S,V)$ be a complete semiring-semimodule pair and
        $M \in \Spnngam$ be a pushdown transition matrix.
        Then, for all $0 \leq l \leq n$,
        \begin{equation*}
        (((M^*)_{p, \epsilon})_{p \in \Gamma}, ((M^{\omega,l})_p)_{p \in \Gamma})
        \end{equation*}
        is a solution of \eqref{sys:1}.
\end{corollary}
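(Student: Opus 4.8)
The plan is to reduce the statement to the structural characterization of solutions of \eqref{sys:1} recorded just before the corollary. By Theorem 5.6.1 of \'Esik, Kuich \cite{MAT}, a pair $(A,U) \in ((\Snn)^\Gamma, (\Sn)^\Gamma)$ is a solution of \eqref{sys:1} if and only if $A$ is a solution of the $\Spnn$-algebraic system \eqref{sys:2} over $\Snn$ and $(A,U)$ is a solution of the $\Snn$-linear system \eqref{sys:3} over $\Sn$. Hence it suffices to verify these two conditions separately for $A = ((M^*)_{p,\epsilon})_{p \in \Gamma}$ and $U = ((M^{\omega,l})_p)_{p \in \Gamma}$, and then to assemble them via this equivalence.

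First I would check the algebraic component: that $A = ((M^*)_{p,\epsilon})_{p\in\Gamma}$ solves \eqref{sys:2}. This is exactly Theorem \ref{thm:9}, which itself rests on the factorization $(M^*)_{p\pi,\epsilon} = (M^*)_{p,\epsilon}(M^*)_{\pi,\epsilon}$ of Theorem \ref{thm:1}, so nothing new is required here. Next I would substitute this same $A$ into the linear system \eqref{sys:3}, replacing each variable $x_p$ by $(M^*)_{p,\epsilon}$. By the definition of the matrix $A_M$ and the derivation of \eqref{sys:4}, this substitution carries \eqref{sys:3} precisely to \eqref{sys:4}. Consequently $(A,U)$ solves \eqref{sys:3} if and only if $U = ((M^{\omega,l})_p)_{p\in\Gamma}$ solves \eqref{sys:4}, and the latter is exactly the content of the theorem immediately preceding this corollary, which is in turn a consequence of Theorem \ref{thm:3.8}. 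Combining the two verified conditions and applying Theorem 5.6.1 then yields the claim.

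I do not expect a genuine obstacle, since the argument is essentially a gluing of two already-established facts through the given equivalence; the work is bookkeeping rather than computation. The only point demanding care is to confirm that the substitution $x_p \mapsto (M^*)_{p,\epsilon}$ sends \eqref{sys:3} to \eqref{sys:4} on the nose, i.e.\ that the coefficient of $z_{p'}$ produced by the substitution coincides with $(A_M)_{p,p'}$ as defined. This matches the defining formula for $(A_M)_{p,p'}$ verbatim, so the identification is immediate and the corollary follows directly once Theorem \ref{thm:9} and the preceding theorem are in hand.
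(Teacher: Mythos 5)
Your proposal is correct and follows exactly the route the paper intends: reduce via Theorem 5.6.1 of \'Esik, Kuich \cite{MAT} to the two conditions, discharge the first by Theorem \ref{thm:9}, and discharge the second by noting that substituting $x_p \mapsto (M^*)_{p,\epsilon}$ turns \eqref{sys:3} into \eqref{sys:4}, whose solution $((M^{\omega,l})_p)_{p\in\Gamma}$ is supplied by the preceding theorem (via Theorem \ref{thm:3.8}). The paper leaves this assembly implicit (the corollary carries no written proof); you have merely spelled out the same argument.
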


We can write the system \eqref{sys:4} in matrix notation in the form
\begin{equation}\label{sys:5}
z = A_M z
\end{equation}
with column vector $z = (z_p)_{p \in \Gamma}$.

\begin{corollary}
        Let $(S,V)$ be a complete semiring-semimodule pair and
        $M \in \Spnngam$ be a pushdown transition matrix.
        Then for all $0 \leq l \leq n$,
        $((M^{\omega,l})_p)_{p \in \Gamma}$
        is a solution of \eqref{sys:5}.
\end{corollary}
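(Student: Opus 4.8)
The plan is to observe that \eqref{sys:5} is nothing more than the matrix rendering of \eqref{sys:4}, so that the assertion reduces to a result already in hand. First I would unfold the matrix product on the right-hand side of $z = A_M z$. Writing $z = (z_p)_{p\in\Gamma}$ as a column vector over the semimodule $\Sn$ and $A_M = ((A_M)_{p,p'})_{p,p'\in\Gamma}$ with entries $(A_M)_{p,p'} \in \Snn$, the definition of the product of a matrix over $\Snn$ with a column vector over $\Sn$, taken in the sense of the complete semiring-semimodule pair $(\Snn,\Sn)$, gives, for each $p\in\Gamma$,
\[
(A_M z)_p = \sum_{p'\in\Gamma} (A_M)_{p,p'} z_{p'}\,.
\]
Thus the single matrix equation \eqref{sys:5} holds for a tuple $z$ exactly when the family of equations $z_p = \sum_{p'\in\Gamma}(A_M)_{p,p'} z_{p'}$, $p\in\Gamma$, holds, and this family is precisely \eqref{sys:4}.

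Second, I would invoke the preceding theorem, which asserts that for every $0\leq l\leq n$ the tuple $((\Moml)_p)_{p\in\Gamma}$ is a solution of \eqref{sys:4}. Combining the two observations, the very same tuple satisfies \eqref{sys:5}, which is what is to be shown. In effect the corollary is a mere change of notation applied to the theorem that it follows.

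The only point requiring any care, and the closest thing to an obstacle, is the bookkeeping about summation conventions: one must confirm that the sum defining $(A_M z)_p$ over the (possibly infinite) index set $\Gamma$ is interpreted using the completeness of $(\Snn,\Sn)$, so that it coincides coordinatewise with the sum appearing in \eqref{sys:4}. Since $(S,V)$, and hence $(\Snn,\Sn)$, is a complete semiring-semimodule pair, these sums are defined and the matrix-vector product agrees with the expression in \eqref{sys:4}. Beyond this routine verification there is nothing further to prove.
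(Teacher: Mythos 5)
Your proposal is correct and matches the paper's (implicit) argument exactly: the paper introduces \eqref{sys:5} as nothing but the matrix notation for \eqref{sys:4}, so the corollary follows immediately from the preceding theorem, which is precisely the reduction you carry out. Your additional remark about completeness guaranteeing that the possibly infinite sum over $\Gamma$ in the matrix-vector product is well defined is a harmless elaboration of the same routine identification.
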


We now introduce pushdown automata and $\omega$-pushdown automata
(see Kuich, Salomaa \cite{88}, Kuich \cite{78}, Cohen, Gold \cite{23}).

 Let $\Sr$ be a complete semiring and
 $\Sr' \subseteq \Sr$ with $0,1 \in \Sr'$.
 An \emph{$\Sr'$-pushdown automaton over $\Sr$}
 $$\Pmc = (n,\Gamma,I,M,P,p_0)$$
 is given by
 \begin{itemize}
        \item[(\textit{i})] a finite set of \emph{states} $\{1,\dots,n\}$,
        $n\geq 1$,
        \item[(\textit{ii})] an alphabet $\Gamma$ of \emph{pushdown symbols},
        \item[(\textit{iii})] a \emph{pushdown transition matrix}
        $M \in  (\Spnn)^{\Gamma^* \times \Gamma^*}$,
        \item[(\textit{iv})] an \emph{initial state vector} $I \in {S'}^{1 \times n}$,
        \item[(\textit{v})] a \emph{final state vector} $P \in {S'}^{n \times 1}$,
        \item[(\textit{vi})] an \emph{initial pushdown symbol} $p_0 \in \Gamma$,
 \end{itemize}

 The \emph{behavior $\bhvr{\Pmc}$ of $\Pmc$} is an element of
 $\Sr$ and is defined by
 %\begin{equation*}
 $\bhvr{\Pmc} = I(M^*)_{p_0,\epsilon}P $.
 %\end{equation*}

 For a complete semiring-semimodule pair
 $(S,V)$,
 an $\Sr'$-$\omega$-pushdown automaton
 (over $(S,V)$)
 \begin{equation*}
 \Pmc = (n, \Gamma, I, M, P, p_0, l)
 \end{equation*}
 is given by an $\Sr'$-pushdown automaton
 $(n, \Gamma, I,M, P, p_0)$ and an $l \in \{0,\dots,n\}$ indicating that the states $1,\dots, l$
 are \emph{repeated states}.

 The \emph{behavior} $\bhvr{\Pmc}$ of the $\Sr'$-$\omega$-pushdown automaton $\Pmc$ is defined by
 \begin{equation*}
 \bhvr{\Pmc} = I(M^*)_{p_0,\epsilon} P + I(M^{\omega,l})_{p_0}\, .
 \end{equation*}

 Here $I(M^*)_{p_0,\epsilon}P$ is the behavior of the
 $\Sr'$-$\omega$-pushdown automaton
 $\Pmc_1=(n,\Gamma,I,M,P,p_0,0)$ and
 $I(M^{\omega,l})_{p_0}$ is the behavior of the $\Sr'$-$\omega$-pushdown automaton
 $\Pmc_2= (n,\Gamma,I,M,0,p_0,l)$.
 Observe that $\Pmc_2$ is an automaton with the
 B\"{u}chi acceptance condition:
 if $G$ is the graph with adjacency matrix $M$, then only paths that visit
 the repeated states $1,\dots,l$ infinitely often contribute to
 $\bhvr{\Pmc_2}$.
 Furthermore, $\Pmc_1$ contains no repeated states and behaves like an ordinary $\Sr'$-pushdown automaton.

 \begin{theorem}
        Let $(S,V)$ be a complete semiring-semi\-module pair and let
        $\Pmc = (n,\Gamma, I, M, P,\allowbreak p_0,l)$ be an
        $S'$-$\omega$-pushdown automaton over $(S,V)$.
        Then $(\bhvr{\Pmc}, (((M^*)_{p,\epsilon})_{p\in \Gamma},((M^{\omega,l})_{p})_{p\in \Gamma}))$
         is a solution of the
        $\Spnn$-algebraic system
        \begin{equation*}
        y_0 = Iy_{p_0}P, y_p = \sum_{\pi \in \Gamma^*}
        M_{p,\pi}y_\pi, \, p \in \Gamma
        \end{equation*}
        over the complete semiring-semimodule pair
        $(\Snn,\Sn)$.
 \end{theorem}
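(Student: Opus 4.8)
The plan is to exploit the fact that the displayed system splits cleanly into two parts: the single "output" equation $y_0 = Iy_{p_0}P$ and the family $y_p = \sum_{\pi\in\Gamma^*} M_{p,\pi}y_\pi$, $p \in \Gamma$. The latter family is literally the system \eqref{sys:1}, so for those equations there is nothing new to establish: by Corollary \ref{cor:11:4.3} the pair $(((M^*)_{p,\epsilon})_{p\in\Gamma},((M^{\omega,l})_p)_{p\in\Gamma})$ is already a solution of \eqref{sys:1}, and Corollary \ref{cor:11:4.3} in turn rests on Theorems \ref{thm:1} and \ref{thm:3.8}. Thus the whole remaining burden is to verify the one equation $y_0 = Iy_{p_0}P$ under the proposed assignment.

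To check that equation I would substitute $y_{p_0} = ((M^*)_{p_0,\epsilon},(M^{\omega,l})_{p_0})$ and evaluate the right-hand side inside the quemiring $\Snn \times \Sn$ (cf. \cite{MAT}, page 110). Writing $A = (M^*)_{p_0,\epsilon}$ and $U = (M^{\omega,l})_{p_0}$, so that $y_{p_0} = (A,U)$, the scalar vectors $I$ and $P$ enter as the pure-semiring elements $(I,0)$ and $(P,0)$, and the quemiring product rule $(A_1,U_1)(A_2,U_2) = (A_1A_2,\, A_1U_2 + U_1)$ yields $(I,0)(A,U)(P,0) = (IA,IU)(P,0) = (IAP,IU)$. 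Hence $Iy_{p_0}P = (I(M^*)_{p_0,\epsilon}P,\, I(M^{\omega,l})_{p_0})$, which is precisely the quemiring element $I(M^*)_{p_0,\epsilon}P + I(M^{\omega,l})_{p_0} = \bhvr{\Pmc}$. This gives $y_0 = \bhvr{\Pmc}$ and completes the verification.

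The step I expect to require the most care is this quemiring evaluation of $Iy_{p_0}P$, and in particular the fact that the final-state vector $P$ multiplies only the semiring component $A$ (the finite, terminating behavior) while leaving the semimodule component $U$ untouched. This matches the intended semantics — infinite runs are accepted by the B\"uchi condition and are never finalized through $P$ — and it is exactly what the product rule encodes, since right multiplication by $(P,0)$ sends $(A,U)$ to $(AP,\, A\cdot 0 + U) = (AP,U)$. Beyond this definitional bookkeeping the argument is routine, because all the analytic content, namely the recursions for $(M^*)_{p,\epsilon}$ and $(M^{\omega,l})_p$, was already delivered through Corollary \ref{cor:11:4.3}.
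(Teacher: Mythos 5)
Your proposal is correct and follows essentially the same route as the paper: invoke Corollary \ref{cor:11:4.3} to dispose of the equations $y_p = \sum_{\pi \in \Gamma^*} M_{p,\pi}y_\pi$, then verify $y_0 = Iy_{p_0}P$ by computing $I(((M^*)_{p_0,\epsilon}),((M^{\omega,l})_{p_0}))P = (I(M^*)_{p_0,\epsilon}P,\, I(M^{\omega,l})_{p_0}) = \bhvr{\Pmc}$ in the quemiring. Your explicit unfolding of the quemiring product rule $(A_1,U_1)(A_2,U_2)=(A_1A_2,\,U_1+A_1U_2)$ is exactly the bookkeeping the paper leaves implicit in that one displayed identity.
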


 \begin{proof}
        By Corollary \ref{cor:11:4.3},
        $(((M^*)_{p,\epsilon})_{p\in \Gamma},((M^{\omega,l})_{p})_{p\in \Gamma})$
        is a solution of the second equation.
        Since
        \begin{equation*}
        I(((M^*)_{p_0,\epsilon}),((M^{\omega,l})_{p_0}))P =
        (I(M^*)_{p_0,\epsilon}P, I(M^{\omega, l})_{p_0}) =
        \bhvr{\Pmc} \, ,
        \end{equation*}
        $(\bhvr{\Pmc}, (((M^*)_{p,\epsilon})_{p \in \Gamma}, ((M^{\omega,l})_p)_{p\in \Gamma}))$
        is a solution of the given $\Spnn$-algebraic system.
 \end{proof}

 Let $\Sr$ be a complete star-omega semiring and $\Sigma$ be an alphabet.
 Then by Theorem 5.5.5 of \'Esik, Kuich \cite{MAT},
 $(S \llss, S \llso)$ is a complete semiring-semimodule pair.
 Let $\Pmc = (n, \Gamma, M, I, P, p_0,l)$ be an
 $\ssigep$-$\omega$-pushdown automaton over
 ${(S \llss,}$ ${S \llso)}$.
 Consider the algebraic system over the complete
 semiring-semimodule pair
 $({(S \llss)^{n\times n}}, {(S \llso)^{n}})$
 \begin{equation} \label{sys:6}
 y_0 = I y_{p_0} P, y_p = \sum_{\pi \in \Gamma^*} M_{p,\pi}y_\pi, p \in \Gamma
 \end{equation}
 and the mixed algebraic system \eqref{sys:7}
 over $((S \llss)^{n \times n}, (S \llso)^n)$
 induced by \eqref{sys:6}
 \begin{equation}\label{sys:7}
 \begin{aligned}
 x_0 & = I x_{p_0}P, x_p = \sum_{\pi = p_1 \dots p_k \in \Gamma^*} M_{p,\pi} x_{p_1} \dots x_{p_k}, p \in \Gamma \, ,\\
 z_0 & = I z_{p_0}, z_p = \sum_{\pi = p_1 \dots p_k \in \Gamma^+} M_{p,\pi} \sum_{1 \leq j \leq k} x_{p_1} \dots x_{p_{j-1}} z_{p_j}, p \in \Gamma \,.
 \end{aligned}
 \end{equation}

 \begin{corollary}
         Let $(S,V)$ be a complete semiring-semimodule pair,
        $\Sigma$ be an alphabet and
        $\Pmc = (n,\Gamma, M,I,P,\allowbreak p_0,l)$ be an
        $\ssigep$-$\omega$-pushdown automaton over
        $(S \llss, S \llso)$.

        Then $(I(M^*)_{p_0,\epsilon}P,
        ((M^*)_{p,\epsilon})_{p \in \Gamma},\allowbreak
        I(M^{\omega, l})_{p_0},
        ((M^{\omega, l})_p)_{p \in \Gamma})$
        is a solution of \eqref{sys:7}.
        It is called solution of order $l$.
 \end{corollary}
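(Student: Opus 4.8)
The plan is to verify directly that the proposed quadruple
$(I(M^*)_{p_0,\epsilon}P,\ ((M^*)_{p,\epsilon})_{p\in\Gamma},\ I(M^{\omega,l})_{p_0},\ ((M^{\omega,l})_p)_{p\in\Gamma})$
satisfies each of the four equations of the mixed system \eqref{sys:7}, reducing every one of them to a result already established. The two $\Gamma$-indexed equations carry all the content; the two scalar equations for $x_0$ and $z_0$ then fall out by inspection.

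First I would treat the equation $x_p = \sum_{\pi = p_1 \dots p_k \in \Gamma^*} M_{p,\pi} x_{p_1}\dots x_{p_k}$ for $p\in\Gamma$. This is literally the algebraic system \eqref{sys:2}, so substituting $x_p = (M^*)_{p,\epsilon}$ makes it hold by Theorem \ref{thm:9}, which itself rests on the emptying identity of Theorem \ref{thm:1}. Next I would treat $z_p = \sum_{\pi = p_1 \dots p_k \in \Gamma^+} M_{p,\pi}\sum_{1\leq j\leq k} x_{p_1}\dots x_{p_{j-1}} z_{p_j}$. After substituting the solution $x_p = (M^*)_{p,\epsilon}$ just obtained, the coefficient of $z_{p'}$ is got by collecting, over all factorizations $\pi = p_1\dots p_k$ with $p_j = p'$, the products $M_{p,\pi}(M^*)_{p_1,\epsilon}\dots(M^*)_{p_{j-1},\epsilon}$; by the very definition of $A_M$ this coefficient is exactly $(A_M)_{p,p'}$, so the equation collapses to the linear system \eqref{sys:4} (equivalently \eqref{sys:5}). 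Setting $z_p = (M^{\omega,l})_p$ then satisfies it by the theorem asserting that $((M^{\omega,l})_p)_{p\in\Gamma}$ solves \eqref{sys:4}, which relies on Theorem \ref{thm:3.8}. Alternatively, both $\Gamma$-indexed equations can be discharged in a single stroke by invoking Corollary \ref{cor:11:4.3} together with Theorem 5.6.1 of \'Esik, Kuich \cite{MAT}.

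Finally, the scalar equations $x_0 = I x_{p_0} P$ and $z_0 = I z_{p_0}$ hold immediately once $x_{p_0}$ and $z_{p_0}$ are assigned their values $(M^*)_{p_0,\epsilon}$ and $(M^{\omega,l})_{p_0}$: the right-hand sides then become precisely the prescribed $x_0 = I(M^*)_{p_0,\epsilon}P$ and $z_0 = I(M^{\omega,l})_{p_0}$.

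I expect the only delicate step to be the reindexing in the $z_p$ equation, namely checking that after the substitution $x_p \mapsto (M^*)_{p,\epsilon}$ the double sum $\sum_{\pi} M_{p,\pi}\sum_{1\leq j\leq k}(\cdots)$ regroups exactly into $\sum_{p'\in\Gamma}(A_M)_{p,p'} z_{p'}$. This is bookkeeping rather than a genuine obstacle, since $A_M$ was defined for precisely this purpose; comparable care is needed only to confirm that splitting the quemiring equation for $y_0$ into its semiring component $x_0$ and its semimodule component $z_0$ respects the isomorphism $T \cong \Sr \times V$.
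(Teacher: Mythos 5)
Your proposal is correct and follows essentially the same route as the paper: the paper states this corollary without a separate proof precisely because it is the assembly you describe — the $x_p$-equations are \eqref{sys:2} and are solved by $((M^*)_{p,\epsilon})_{p\in\Gamma}$ via Theorem \ref{thm:9}, the $z_p$-equations collapse under that substitution to \eqref{sys:4} and are solved by $((M^{\omega,l})_p)_{p\in\Gamma}$ via Theorem \ref{thm:3.8} (equivalently, both at once by Corollary \ref{cor:11:4.3} with Theorem 5.6.1 of \'Esik, Kuich \cite{MAT}), and the scalar equations for $x_0$ and $z_0$ hold by direct substitution. Your flagged ``delicate step'' — regrouping the double sum into $\sum_{p'}(A_M)_{p,p'}z_{p'}$, which also uses Theorem \ref{thm:1} to identify $(M^*)_{p_1,\epsilon}\cdots(M^*)_{p_{j-1},\epsilon}$ with $(M^*)_{p_1\dots p_{j-1},\epsilon}$ — is exactly the bookkeeping the paper performs in the proof of Theorem \ref{thm:3.5}.
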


 Let now in \eqref{sys:7}
 \begin{equation*}
 x = ( [i,p,j])_{1 \leq i,j \leq n} , p \in \Gamma,
 \end{equation*}
 be $n \times n$-matrices of variables and
 \begin{equation*}
 z = ([i,p])_{1 \leq i \leq n}, p\in \Gamma
 \end{equation*}
 be $n$-dimensional column vectors of variables.
 If we write the mixed algebraic system \eqref{sys:7}
 component-wise, we obtain a mixed algebraic system over
 $({(\Sr\ll \Sigma^* \gg),} {(\Sr \ll \Sigma^\omega \gg)} )$
 with variables $\ipj$ over $\Sr\llss$, where $p \in \Gamma$, $1 \leq i,j \leq n$, and
 variables $[i,p]$ over $\Sr \ll \Sigma^\omega \gg$,
 where $p \in \Gamma$, $1 \leq i \leq n$.

 Writing the mixed algebraic system \eqref{sys:7}
 component-wise, we obtain the system \eqref{sys:8}:
 \begin{equation}\label{sys:8}
  \begin{aligned}
  &x_0  = \sum_{1 \leq m_1 , m_2 \leq n} I_{m_1}
 [m_1, p_0,m_2] P_{m_2} ,\\
  &\ipj=  \sum_{k \geq 0} \sum_{p_1,\dots,p_k \in \Gamma}
 \sum_{1 \leq m_1,\dots,m_k \leq n} (M p,p_1\dots p_k)_{i,m_1} [m_1, p_1,m_2][m_2,p_2,m_3] \dots
 [m_k,p_k,j] , \\
 &{\hspace{11cm} p \in \Gamma, 1 \leq i,j \leq n ,} \\
  &z_0 = \sum_{1 \leq m \leq n} I_m [m,p_0]\\
  &[i,p] = \! \sum_{k \geq 1} \sum_{p_1, \dots, p_k \in \Gamma} \sum_{1 \leq j \leq k} \sum_{1 \leq m_1, \dots, m_j \leq n}\hspace{-0.5cm} (M p,p_1\dots p_k)_{i,m_1} [m_1, p_1,m_2]
 \dots[m_{j\text{-}1},p_{j\text{-}1},m_j][m_{j},p_j],\\
 & \hspace{11.2cm} \, p \in \Gamma, 1 \leq i \leq n \, .
 \end{aligned}
 \end{equation}

 \begin{theorem}\label{thm:15}
         Let $(S,V)$ be a complete semiring-semimodule pair and
        $\Pmc= (n, \Gamma, M, I, p_0, \allowbreak P,l)$ be a
        $\Sr'$-$\omega$-pushdown automaton.
        Then
        \begin{equation*}
        (I(M^*)_{p_0,\epsilon} P,(((M^*)_{p,\epsilon})_{i,j})_{p \in \Gamma, 1 \leq i,j \leq n},
        I(M^{\omega,l})_{p_0}, ((M^{\omega, l})_p)_i)_{p\in \Gamma, 1 \leq i \leq n}
        \end{equation*}
        is a solution of the system \eqref{sys:8} called solution of order $l$ with
        $\bhvr{\Pmc} = (I(M^*)_{p_0, \epsilon} P , I(M^{\omega, l})_{p_0})$.
 \end{theorem}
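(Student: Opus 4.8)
The plan is to obtain Theorem~\ref{thm:15} as the entrywise transcription of the preceding corollary, which already exhibits a solution of order $l$ of the matrix/vector system \eqref{sys:7}. The key observation is that \eqref{sys:8} is nothing but \eqref{sys:7} written component-wise under the identifications $x_p = (\ipj)_{1 \leq i,j \leq n}$ and $z_p = (\ip)_{1 \leq i \leq n}$, together with the scalar components $x_0$ and $z_0$. Thus no new fixed-point computation is required: it suffices to check that reading off the $(i,j)$-entries of the matrix part and the $i$-entries of the vector part of the order-$l$ solution of \eqref{sys:7} reproduces exactly the equations \eqref{sys:8}.

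First I would fix the correspondence between the variables of the two systems: the matrix variable $x_p \in \Snn$ carries entries $\ipj$, the column vector $z_p \in \Sn$ carries entries $\ip$, and $x_0,z_0$ are the scalar components. Substituting the order-$l$ solution of \eqref{sys:7} then amounts to the assignments
\begin{equation*}
\ipj \mapsto ((M^*)_{p,\epsilon})_{i,j}, \qquad \ip \mapsto ((\Moml)_p)_i, \qquad x_0 \mapsto I(M^*)_{p_0,\epsilon}P, \qquad z_0 \mapsto I(\Moml)_{p_0},
\end{equation*}
which is precisely the tuple claimed in the statement.

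Next I would expand each equation of \eqref{sys:7} entrywise. For the algebraic part, the $(i,j)$-entry of $M_{p,\pi}x_{p_1}\cdots x_{p_k}$, with $\pi = p_1 \dots p_k$, equals, by associativity of matrix multiplication and the distributive laws of the complete semiring-semimodule pair, the sum over $1 \leq m_1,\dots,m_k \leq n$ of $(M_{p,\pi})_{i,m_1}[m_1,p_1,m_2]\cdots[m_k,p_k,j]$; summing over $k$ and over $p_1,\dots,p_k$ then reproduces the $\ipj$-equation of \eqref{sys:8}. The same expansion applied to $M_{p,\pi}x_{p_1}\cdots x_{p_{j-1}}z_{p_j}$ introduces the intermediate indices $m_1,\dots,m_j$ and the trailing vector entry $[m_j,p_j]$, giving the $\ip$-equation of \eqref{sys:8}, while $x_0 = Ix_{p_0}P$ and $z_0 = Iz_{p_0}$ expand into the two scalar equations. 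Since the preceding corollary (the one exhibiting the solution of order $l$ of \eqref{sys:7}) guarantees that the matrix/vector tuple solves \eqref{sys:7}, taking entries on both sides shows that the substituted values solve \eqref{sys:8}. Finally, $\bhvr{\Pcc} = (I(M^*)_{p_0,\epsilon}P, I(\Moml)_{p_0})$ is precisely the definition of the behavior of the $\omega$-pushdown automaton, so this pair is the $(x_0,z_0)$-component of the solution of order $l$.

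The only genuine work, and the point most prone to error, is the entrywise expansion of the iterated matrix and matrix-vector products: one must track the ranges of the intermediate state indices correctly and verify that the infinite sums over $k$, over $p_1,\dots,p_k$, and over the internal states commute with the extraction of a fixed entry. This commutation is exactly what the ``infinite'' distributive laws of a complete semiring-semimodule pair supply, so the argument remains routine and presents no obstacle beyond careful bookkeeping.
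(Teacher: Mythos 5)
Your proposal is correct and matches the paper's (implicit) argument exactly: the paper states Theorem~\ref{thm:15} without a separate proof, treating it as the immediate component-wise transcription of the preceding corollary on system~\eqref{sys:7}, which is precisely the reduction you carry out. Your entrywise expansion of the matrix and matrix-vector products, justified by the distributivity laws of the complete semiring-semimodule pair, simply makes explicit the bookkeeping the paper leaves to the reader.
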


\section{Mixed algebraic systems and mixed context-free grammars} \label{sec:4}

In this section we associate a mixed context-free
grammar with finite and infinite derivations to the
algebraic system \eqref{sys:8}.
The language generated by this mixed context-free
grammar is then the behavior $\bhvr{\Pmc}$ of the
$\omega$-pushdown automaton $\Pmc$.
The construction of the mixed context-free grammar from
the $\omega$-pushdown automaton $\Pmc$ is a generalization of
the well known triple construction and is called now
\emph{triple-pair construction for $\omega$-pushdown
        automata}.
We will consider the commutative complete star-omega semirings
$\B = ( \{0,1\}, \vee, \land, *,0,1)$ with
$0^* =1^*=1$ and $\N^\infty = (\N \cup \{\infty\}, +, \cdot, ^*, 0,1)$
with $0^* =1$ and $a^* = \infty$ for $a \neq 0$.

If $\Sr = \mathbb{B}$ or $\Sr = \N^\infty$ and
$0 \leq l \leq n$, then we associate to the mixed algebraic
system \eqref{sys:8} over
$((\Sr\ll \Sigma^* \gg), (\Sr \ll \Sigma^\omega \gg) )$, and hence to the $\omega$-pushdown automaton $\Pmc = (n, \Gamma, I, M, P, p_0, l)$,
the \emph{mixed context-free grammar}
\begin{equation*}
G_l \ = \ (X,Z,\Sigma, P_X, P_Z, x_0, z_0, l) \, .
\end{equation*}
( See also \'Esik, Kuich \cite[page 139]{MAT}.)
Here

\begin{itemize}
        \item[(i)] $X=\{x_0\} \cup \{[i,p,j]\mid 1\leq i,j\leq n,\, p \in \Gamma\}$ is a set of \emph{variables for finite derivations};
        \item[(ii)] $Z = \{z_0\} \cup \{[i,p] \mid 1 \leq i \leq n, \, p \in \Gamma \}$ is a set of \emph{variables for infinite derivations};
        \item[(iii)] $\Sigma$ is an alphabet of \emph{terminal symbols};
        \item[(iv)] $P_X$ is a finite set of \emph{productions for finite derivations} given below;
        \item[(v)] $P_Z$ is a finite set of \emph{productions for infinite derivations} given below;
        \item[(vi)] $x_0$ is the \emph{start variable for finite derivations};
        \item[(vii)] $z_0$ is the \emph{start variable for infinite derivations};
        \item[(viii)] $\{[i,p] \mid 1 \leq i \leq l, \, p \in \Gamma \}$ is the set of
        \emph{repeated variables for infinite derivations}.
\end{itemize}
In the definition of $G_l$
the sets $P_X$ and $P_Z$ are as follows:
\begin{align*}
P_X = \  & \{ x_0 \to a_1 \mpom a_2 \mid \\
& \ \ 1 \leq m_1,m_2 \leq n, (I_{m_1},a_1) \neq 0 , (P_{m_2},a_2) \neq 0,
a_1,a_2 \in \Sigma \cup \{\epsilon\}\}\ \cup\\
& \{\ipj \to a \mpim [m_2,p_2,m_3] \dots [m_k,p_k,j]\mid p \in \Gamma,  1 \leq i,j \leq n, k\geq 0,\\
& \ \ p_1,\dots,p_k \in \Gamma, 1 \leq m_1,\dots,m_k \leq n,
((M_{p,p_1\dots p_k})_{i,m_1},a)\neq 0, a \in \Sigma \cup \{\epsilon\}  \} \ , \\
P_Z = \ & \{z_0 \to a [m,p_0] \mid 1 \leq m \leq n, (I_m,a) \neq 0, a \in \Sigma \cup \{\epsilon\} \} \ \cup\\
& \{ \ip \to a \mpim \dots [m_{j-1},p_{j-1},m_j][m_j,p_j]\mid p, p_1, \dots, p_k \in \Gamma, 1 \leq i \leq n, \\
& \ \  k\geq 1, 1 \leq j \leq k, 1 \leq m_1,\dots,m_j\leq n,
((M_{p,p_1\dots p_k})_{i,m_1},a) \neq 0, a \in \Sigma \cup \{\epsilon\}\} \ .
\end{align*}
For the remainder of this section, $\Pmc$ always denotes the $\omega$-pushdown automaton  $\Pmc = (n, \Gamma, I, M, P, p_0, l)$. Especially this means that $l$ is a fixed parameter.
Observe that $((M_{p,p_1\dots p_k})_{i,m_1},a)\neq 0$ iff
$(m_1,p_k\dots p_1) \in \delta(i,a,p)$ in the usual $\delta$-notation for the transition function of a classical pushdown automaton. (See Harrison \cite{Har} and Kuich \cite{78} pages 638/639.)
Here we have to reverse $p_1 \dots p_k$ since the pushdown
tape of classical pushdown automata has its rightmost element
as top element.

A \emph{finite leftmost derivation}
$\alpha_1 \derls \alpha_2$, where
$\alpha_1, \alpha_2 \in (X \cup \Sigma)^*$, by productions
in $P_X$ is defined as usual.
An \emph{infinite (leftmost) derivation}
$\pi : z_0 \derl^{\omega} w$, for $z_0 \in Z, w \in \Sigma^\omega$, is defined as follows:
\begin{align*}
\pi:\ & z_0 \derl \alpha_0 [{i_0},p_0] \derls w_0  [i_0,p_0] \derl w_0
\alpha_1 [i_1,p_1] \derls w_0 w_1  [{i_1},p_1]  \derl \dots \\
& \derls w_0w_1 \dots w_m  [{i_m},p_m]  \derl
w_0 w_1 \dots w_m \alpha_{m+1}
[{i_{m+1}},p_{m+1}]  \derls \dots \, ,
\end{align*}
where $z_0 \to \alpha_0 [{i_0},p_0] ,
[{i_0},p_0] \to \alpha_1  [{i_1},p_1] , \dots, [{i_m},p_{m}] \to \alpha_{m+1}[{i_{m+1}},p_{m+1}], \dots$
are productions in $P_Z$ and $w = w_0 w_1 \dots w_m \dots$.

We now define an infinite derivation
$\pi_l : z_0 \derl^{\omega,l} w$ for $0 \leq l \leq n$, $z_0 \in Z$, $w \in \Sigma^\omega$:
We take the above definition for
$\pi:z_0 \Rightarrow_L^\omega w$
and consider the sequence of the first elements $i$ of the
triple variables $[i,p,j]$ of $X$ that are rewritten in the finite leftmost
derivation $\alpha_m \derls w_m$, $m \geq 0$.
Assume this sequence is $i_m^1, i_m^2, \dots, i_m^{t_m}$ for some $t_m$, $m \geq 1$.
Then, to obtain $\pi_l$ from $\pi$, the condition
$i_0, i_1^1, i_1^2 \dots, i_1^{t_1}, i_1, i_2^1, \dots,\allowbreak i_2^{t_2}, i_2, \dots, i_m, i_{m+1}^1, \dots, i_{m+1}^{t_{m+1}}, i_{m+1}, \dots \in P_l$ has to be satisfied.

Then we define
\begin{equation*}
L(G_l) =  \{w \in \Sigma^* \mid x_0 \derls w \} \ \cup \ \{ w \in \Sigma^\omega \mid \pi : z_0 \derl^{\omega,l} w \} \, .
\end{equation*}
Observe that the construction of $G_l$ from $\Pmc$ is
nothing else than a generalization of the triple
construction for $\omega$-pushdown automata, since the construction
of the context-free grammar $G = (X, \Sigma, P_X,x_0)$
 is the triple construction.
(See  Harrison \cite{Har}, Theorem 5.4.3;
Bucher, Maurer \cite{17}, S\"atze 2.3.10, 2.3.30; Kuich, Salomaa \cite{88}, pages 178, 306; Kuich \cite{78}, page 642;
\'Esik, Kuich \cite{MAT}, pages 77, 78.)

We call the construction of the mixed context-free grammar
$G_l$ from $\Pmc$
the \emph{triple-pair construction for $\omega$-pushdown automata}.
This is justified by the definition of the sets of variables
$\{ \ipj \mid 1 \leq i,j, \leq n, p \in \Gamma\}$ and
$\{[i,p] \mid 1 \leq i \leq n, p \in \Gamma\}$ of $G_l$ and by the forthcoming Corollary \ref{cor:4.2}.

In the next theorem we use the isomorphism
between ${\B \llss } \times {\B \llso}$ and
$2^{\Sigma^*} \times 2^{\Sigma^\omega}$.

\begin{theorem}\label{thm:4.1}
        Assume that $(\sigma, \tau)$ is the solution of order
        $l$ of the mixed algebraic system \eqref{sys:8} over
        $(\B \llss , \B \llso)$ for $k \in \{0, \dots, n\}$.
        Then
        \begin{equation*}
        L(G_l) \ = \ \sigma_{x_0} \cup \tau_{z_0} \, .
        \end{equation*}
\end{theorem}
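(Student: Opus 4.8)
The plan is to split the equality $L(G_l) = \sigma_{x_0}\cup\tau_{z_0}$ according to the two kinds of words it contains. By Theorem~\ref{thm:15} the solution of order $l$ has $x_0$-component $\sigma_{x_0}=I(M^*)_{p_0,\epsilon}P$ and $z_0$-component $\tau_{z_0}=I(M^{\omega,l})_{p_0}$, which under the isomorphism between $\B\llss\times\B\llso$ and $2^{\Sigma^*}\times 2^{\Sigma^\omega}$ we read as a language of finite words and a language of infinite words, respectively. Since $L(G_l)=\{w\in\Sigma^*\mid x_0\derls w\}\cup\{w\in\Sigma^\omega\mid z_0\derl^{\omega,l}w\}$, it suffices to prove the two identities $\{w\in\Sigma^*\mid x_0\derls w\}=I(M^*)_{p_0,\epsilon}P$ and $\{w\in\Sigma^\omega\mid z_0\derl^{\omega,l}w\}=I(M^{\omega,l})_{p_0}$ separately.

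For the \emph{finite part}, I would first observe that $(X,\Sigma,P_X,x_0)$ together with the variables $[i,p,j]$ is exactly the classical triple construction for the underlying pushdown automaton $(n,\Gamma,I,M,P,p_0)$. The core step is the claim that for all $p\in\Gamma$ and $1\leq i,j\leq n$ one has $\{w\in\Sigma^*\mid [i,p,j]\derls w\}=((M^*)_{p,\epsilon})_{i,j}$, which I would establish by induction: the inductive step rewrites $[i,p,j]$ by a production corresponding to a transition $M_{p,p_1\dots p_k}$ and then concatenates the subderivations of the triples $[m_1,p_1,m_2]\dots[m_k,p_k,j]$, while on the matrix side the block identity $(M^*)_{p\pi,\epsilon}=(M^*)_{p,\epsilon}(M^*)_{\pi,\epsilon}$ of Theorem~\ref{thm:1} justifies exactly this factorization of emptying moves. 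Feeding this into the $x_0$-productions, whose weights are the $I_{m_1}$ and $P_{m_2}$, yields $\{w\mid x_0\derls w\}=I(M^*)_{p_0,\epsilon}P=\sigma_{x_0}$; the classical correctness of the triple construction (Harrison \cite{Har}; Kuich, Salomaa \cite{88}; \'Esik, Kuich \cite{MAT}) can be cited here to shorten the argument.

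The \emph{infinite part} is the heart of the theorem, and I would route it through the matrix recurrence of Theorem~\ref{thm:3.8}. I would prove, for all $p\in\Gamma$ and $1\leq i\leq n$, that $\{w\in\Sigma^\omega\mid [i,p]\derl^{\omega,l}w\}=((M^{\omega,l})_p)_i$, and then derive the $z_0$-identity from the $z_0$-productions $z_0\to a[m,p_0]$ with weights $I_m$, giving $\{w\mid z_0\derl^{\omega,l}w\}=I(M^{\omega,l})_{p_0}=\tau_{z_0}$. For the key identity the idea is a correspondence between an infinite leftmost derivation from $[i,p]$ and an infinite run in the graph of $M$ that visits the repeated states $1,\dots,l$ infinitely often. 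Each macro-step rewriting one infinite variable $[i_m,p_m]$ into $a\,[m_1,p_1,m_2]\dots[m_{j-1},p_{j-1},m_j][m_j,p_j]$, followed by expanding the triples to a finite word, corresponds precisely to one application of the block $(A_M)_{p_m,p_{m+1}}$ with $p_{m+1}=p_j$: the leading production supplies the factor $M_{p_m,p_1\dots p_k}$ and reads $a$, the finite triple subderivations supply, by the finite part, the factors $(M^*)_{p_1,\epsilon}\dots(M^*)_{p_{j-1},\epsilon}$ of emptying moves, and $[m_j,p_j]$ continues the infinite computation. Iterating this is exactly the unfolding $(M^{\omega,l})_p=\sum_{p'}(A_M)_{p,p'}(M^{\omega,l})_{p'}$ of Theorem~\ref{thm:3.8}, and the regrouping of the underlying infinite product into these macro-steps is licensed by the splitting identity of Lemma~\ref{lem:3.7}.

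The step I expect to be the \emph{main obstacle} is matching the two acceptance conditions. On the grammar side, $\pi_l$ requires that the interleaved sequence $i_0,i_1^1,\dots,i_1^{t_1},i_1,i_2^1,\dots$ of first indices of the triple and infinite variables rewritten lie in $P_l$; on the automaton side, $(M^{\omega,l})_p$ sums only over state sequences $(j_1,j_2,\dots)\in P_l$ along the infinite path. I would make the bijection between derivations and paths explicit, check that the position of each visited state in the path coincides with the position of the corresponding first index in the interleaved derivation sequence, and conclude that ``some repeated state is visited infinitely often'' holds on one side iff it holds on the other. A secondary subtlety to dispatch is the role of $\epsilon$-labeled steps: one must verify that an accepted infinite derivation produces a genuine element of $\Sigma^\omega$ and that this agrees with the interpretation of the infinite product in the semimodule $\B\llso$. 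Because $\B$ is complete and idempotent, all multiplicities collapse to membership, so the two sides coincide as languages; assembling the finite and infinite identities then gives $L(G_l)=\sigma_{x_0}\cup\tau_{z_0}$.
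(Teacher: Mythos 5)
Your proposal is correct, and its skeleton coincides with the paper's proof: split $L(G_l)=\sigma_{x_0}\cup\tau_{z_0}$ into the finite-word and infinite-word parts, identify $\sigma_{x_0}=I(M^*)_{p_0,\epsilon}P$ and $\tau_{z_0}=I(M^{\omega,l})_{p_0}$ via Theorem~\ref{thm:15}, set up a correspondence between (leftmost) derivations of $G_l$ and computations of $\Pmc$ in the graph of $M$, and check that the B\"uchi-type condition ($P_l$ on the interleaved sequence of first indices) on the grammar side matches the $P_l$ condition defining $M^{\omega,l}$ on the automaton side --- the last point being exactly the part the paper also singles out. The difference lies in where the formal weight rests: the paper discharges both correspondences by citation, using Theorem \rom{4}.1.2 of Salomaa, Soittola \cite{SalSoi} for the finite leftmost derivations (both for $x_0$ and for the finite subderivations $\alpha_m \derls w_m$ inside an infinite derivation) and Theorem 5.5.9 of \'Esik, Kuich \cite{MAT} for the infinite derivation itself, whereas you propose to prove them directly: the finite part by induction through the classical triple construction backed by Theorem~\ref{thm:1}, and the infinite part by a macro-step decomposition of infinite runs backed by Lemma~\ref{lem:3.7} and Theorem~\ref{thm:3.8}. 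Your instinct to invoke Lemma~\ref{lem:3.7} and not merely the fixed-point equation of Theorem~\ref{thm:3.8} is important and correct: the equation $(M^{\omega,l})_p=\sum_{p'}(A_M)_{p,p'}(M^{\omega,l})_{p'}$ alone cannot identify the language, since such linear systems may have many solutions; it is the run-splitting identity (and the summation identity for $P_l$ in the proof of Lemma~\ref{lem:3.7}) that makes the macro-step regrouping legitimate. What the paper's route buys is brevity, with the subtleties (including the eventually-$\epsilon$ issue you flag) absorbed into the cited machinery; what your route buys is a self-contained argument, at the price of essentially re-proving special cases of those cited theorems, so if written out in full it would be considerably longer than the paper's proof but would not need the external results beyond the classical triple construction.
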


\begin{proof}
        By Theorem \rom{4}.1.2 of Salomaa, Soittola \cite{SalSoi} and by Theorem \ref{thm:15},
        we obtain $\sigma_{x_0} = \{w \in \Sigma^* \mid x_0 \derls w\}$.
        %	Since $(\sigma, \tau)$ is the solution of order $l$ of \eqref{sys:8}, we obtain by Theorems 5.5.9 and 5.6.2 of \'Esik, Kuich \cite{MAT}
        We now show that $\tau_{z_0}$ is generated by the infinite derivations $\derl^{\omega,l}$ from $z_0$.
        First observe that the rewriting by the
        typical $[i,p,j]$- and $[i,p]$- production
        corresponds to the situation that
        in the graph of the $\omega$-pushdown automaton $\Pmc$ the edge from $(p\rho,i)$
        to $(p_1\dots p_j\rho, j)$, $\rho \in \Gamma^*$, is passed
        after the state $i$ is visited.
        The first step of the infinite derivation
        $\pi_l$ is given by $z_0 \derl \alpha_0 [i_0,p]$ and indicates that the path in the graph of $\Pmc$ corresponding to $\pi_l$ starts in state $i_0$.
        Furthermore, the sequence of the first elements of variables that are rewritten in $\pi_l$, i.e., $i_0, i_1^1, \dots, i_1^{t_1}, i_1, i_2^1, \dots, i_2^{t_2}, i_2, \dots, i_m, i_{m+1}^1, \dots, i_{m+1}^{t_{m+1}}, i_{m+1}, \dots$ indicates that the path in the graph of $\Pmc$ corresponding to $\pi_l$ visits these states.
        Since this sequence is in $P_l$ the corresponding path
        contributes to $\bhvr{\Pmc}$. Hence, by Theorem \rom{4}.1.2 of Salomaa, Soittola \cite{SalSoi} and Theorem \ref{thm:15} for the finite leftmost derivations
        $\alpha_m \derls w_m$, $m \geq 1$, and by Theorem 5.5.9 of \'Esik, Kuich \cite{MAT} and Theorem \ref{thm:15} for the infinite derivation $[i_0,p_0]\Rightarrow \alpha_1 [i_1,p_1] \Rightarrow \alpha_1 \alpha_2 [i_2,p_2] \Rightarrow \dots \Rightarrow \alpha_1 \alpha_2 \dots \alpha_m [i_m,p_m]\Rightarrow \dots$
        we obtain
        \begin{equation*}
        \tau_{z_0} = \{ w \in \Sigma^\omega \mid
        \pi: z_0 \derl^{\omega,l} w\} \, .
        \end{equation*}
\end{proof}

\begin{corollary}\label{cor:4.2}
        Assume that
        the mixed context free grammar $G_l$ associated to the mixed algebraic system \eqref{sys:8} is constructed from the
        $\B \langle \Sigma \cup \{\epsilon\} \rangle$-%
        $\omega$-pushdown automaton $\Pmc$.
        Then
        \begin{equation*}
        L(G_l) = \bhvr{\Pmc} \, .
        \end{equation*}
\end{corollary}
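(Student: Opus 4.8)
The plan is to obtain this corollary by assembling Theorem~\ref{thm:4.1} with Theorem~\ref{thm:15}, since the genuine work has already been done in the proof of Theorem~\ref{thm:4.1}; what remains is only to identify the two components of the solution of order $l$ with the two summands that make up $\bhvr{\Pmc}$. Concretely, I would specialize everything to the Boolean case $\Sr = \B$, $\Sr' = \B\langle \Sigma \cup \{\epsilon\}\rangle$, work throughout under the isomorphism between $\B\llss \times \B\llso$ and $2^{\Sigma^*} \times 2^{\Sigma^\omega}$ already fixed before Theorem~\ref{thm:4.1}, and chain three facts together.

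First I would invoke Theorem~\ref{thm:15} for the given $\omega$-pushdown automaton $\Pmc$ over $(\B\llss, \B\llso)$. This asserts that the solution of order $l$ of the system \eqref{sys:8} has $x_0$-component $I(M^*)_{p_0,\epsilon}P$ and $z_0$-component $I(M^{\omega,l})_{p_0}$. Writing $(\sigma,\tau)$ for this solution, with $\sigma$ collecting the finite-derivation variables and $\tau$ the infinite-derivation variables, we thus have $\sigma_{x_0} = I(M^*)_{p_0,\epsilon}P$ and $\tau_{z_0} = I(M^{\omega,l})_{p_0}$, which under the isomorphism are a language of finite words and a language of infinite words, respectively.

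Next I would apply Theorem~\ref{thm:4.1} to this very same solution $(\sigma,\tau)$, obtaining $L(G_l) = \sigma_{x_0} \cup \tau_{z_0}$. Finally I would recall the definition of the behavior of an $\omega$-pushdown automaton, $\bhvr{\Pmc} = I(M^*)_{p_0,\epsilon}P + I(M^{\omega,l})_{p_0}$, and note that in the Boolean semiring-semimodule pair the sum of a finite-word element and an infinite-word element is exactly the union of the corresponding subsets of $\Sigma^*$ and $\Sigma^\omega$. Substituting the identifications from the first step then yields $L(G_l) = \sigma_{x_0} \cup \tau_{z_0} = I(M^*)_{p_0,\epsilon}P \cup I(M^{\omega,l})_{p_0} = \bhvr{\Pmc}$.

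Since the substantive equivalence between derivations and weighted paths is entirely contained in Theorem~\ref{thm:4.1}, there is no real computational obstacle here. The only point demanding care is the bookkeeping of the identification: one must verify that the solution of order $l$ furnished by Theorem~\ref{thm:15} is literally the pair $(\sigma,\tau)$ fed into Theorem~\ref{thm:4.1}, and that the ``$+$'' in the definition of $\bhvr{\Pmc}$ is the disjoint union of a subset of $\Sigma^*$ with a subset of $\Sigma^\omega$ under the chosen isomorphism. Both are immediate from the definitions, so the corollary follows with no further calculation.
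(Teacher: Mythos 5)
Your proposal is correct and follows exactly the paper's own route: the paper proves Corollary~\ref{cor:4.2} simply ``by Theorems~\ref{thm:15} and~\ref{thm:4.1}'', and your argument is just that chaining, with the bookkeeping (identifying the solution of order $l$ from Theorem~\ref{thm:15} with the $(\sigma,\tau)$ of Theorem~\ref{thm:4.1}, and reading the ``$+$'' in $\bhvr{\Pmc}$ as union under the Boolean isomorphism) spelled out explicitly.
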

\begin{proof}
        By Theorems \ref{thm:15} and \ref{thm:4.1}.
\end{proof}

For the remainder of this section our basic semiring is $\N^\infty$,
which allows us to draw some stronger conclusions.

\begin{theorem}\label{thm:16}
        Assume that $(\sigma, \tau)$ is the
        solution of order $l$ of the mixed algebraic system
        \eqref{sys:8} over
        $(\N^\infty \llss , \N^\infty \llso)$ where the entries of $I,M,P$ are in
        $\{0,1\} \langle \Sigma \cup \{\epsilon\}\rangle$.
        Denote by $d(w)$, for $w \in \Sigma^*$, the number
        (possibly $\infty$) of distinct finite leftmost
        derivations of $w$ from $x_0$ with respect to $G_l$;
        and by $c(w)$, for $w \in \Sigma^\omega$, the number
        (possibly $\infty$) of  distinct infinite leftmost
        derivations $\pi$  of $w$
        % with  $\inv(\pi) \cap \{[l,p] \mid 1 \leq l \leq k \} \neq \emptyset$
        from $z_0$ with respect to $G_l$.
        Then
        \begin{equation*}
        \sigma_{x_0} = \sum_{w \in \Sigma^*} d(w)w \qquad \text{\ and \ } \qquad \tau_{z_0} = \sum_{w \in \Sigma^\omega} c(w)w \, .
        \end{equation*}
\end{theorem}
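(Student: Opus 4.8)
The plan is to reduce Theorem~\ref{thm:16} to the already-established Boolean case (Theorem~\ref{thm:4.1} and Corollary~\ref{cor:4.2}) by tracking \emph{multiplicities} of derivations rather than mere existence. The key structural fact is that the semiring $\N^\infty$ counts, whereas $\B$ merely detects: a coefficient $(\sigma_{x_0}, w) \in \N^\infty$ should record how many distinct finite leftmost derivations $x_0 \derls w$ exist, and likewise $(\tau_{z_0}, w)$ should record the number of distinct infinite derivations $\pi: z_0 \derl^{\omega,l} w$. Since the entries of $I$, $M$, $P$ lie in $\{0,1\}\langle \Sigma \cup \{\epsilon\}\rangle$, each production of $G_l$ carries weight $1$, so the weight of any single derivation is exactly the word it produces (with coefficient $1$), and summing over all derivations in the complete semiring $\N^\infty$ yields precisely the counting coefficients $d(w)$ and $c(w)$.

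First I would invoke Theorem~\ref{thm:15} to identify $(\sigma, \tau)$ as the solution of order $l$, so that $\sigma_{x_0} = I(M^*)_{p_0,\epsilon}P$ and $\tau_{z_0} = I(M^{\omega,l})_{p_0}$, now computed over $(\N^\infty \llss, \N^\infty \llso)$ rather than over $\B$. For the finite part, I would appeal to the $\N^\infty$-weighted analogue of the classical correspondence between context-free grammars and algebraic power series: Theorem~\rom{4}.1.2 of Salomaa, Soittola~\cite{SalSoi} gives that the $x_0$-component of the least solution of the algebraic system~\eqref{sys:8} equals $\sum_{w} d(w)\,w$, where $d(w)$ counts distinct leftmost derivations, exactly because each rule weight is $1$ and the least-fixpoint expansion enumerates each derivation tree once. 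This establishes $\sigma_{x_0} = \sum_{w \in \Sigma^*} d(w)w$ by the same reasoning as in the proof of Theorem~\ref{thm:4.1}, but keeping multiplicities instead of collapsing to Boolean.

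For the infinite part I would run the parallel argument, using Theorem~5.5.9 of \'Esik, Kuich~\cite{MAT} (the $\omega$-counterpart governing the semimodule component $(M^{\omega,l})_{p_0}$) together with Theorem~\ref{thm:15}. The coefficient $(\tau_{z_0}, w)$ is a sum over all infinite runs in the graph of $\Pmc$ that realize $w$ and visit the repeated states along the $P_l$-pattern; by the bijection established in the proof of Theorem~\ref{thm:4.1} between such runs and infinite leftmost derivations $\pi: z_0 \derl^{\omega,l} w$, and because every production weight is $1$, this sum is exactly $c(w)$. Summing over all $w \in \Sigma^\omega$ gives $\tau_{z_0} = \sum_{w} c(w)w$.

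\textbf{The main obstacle} I anticipate is the infinite-derivation bookkeeping: one must verify that distinct infinite leftmost derivations correspond bijectively to distinct infinite paths contributing to $(M^{\omega,l})_{p_0}$, and that no two syntactically different derivations are identified nor any path double-counted. The subtlety is that an infinite derivation is built from an infinite sequence of finite leftmost derivation segments $\alpha_m \derls w_m$ interleaved with $P_Z$-steps, and the $P_l$-membership condition on the sequence of first-components must align precisely with the B\"uchi acceptance of the corresponding path. The convergence of the $\N^\infty$-valued infinite sum is guaranteed by completeness of the semiring-semimodule pair $(\N^\infty \llss, \N^\infty \llso)$, so the genuine work is confirming that the counting is faithful step-by-step, which follows once the bijection of Theorem~\ref{thm:4.1} is reread as multiplicity-preserving rather than merely support-preserving.
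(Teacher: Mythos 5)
Your overall strategy is exactly the paper's: rerun the proof of Theorem \ref{thm:4.1} over $\N^\infty$ instead of $\B$, keeping multiplicities of derivations rather than mere membership. However, there is a concrete gap in how you discharge the two load-bearing steps. You invoke Theorem \rom{4}.1.2 of Salomaa, Soittola \cite{SalSoi} for the finite part and Theorem 5.5.9 of \'Esik, Kuich \cite{MAT} for the infinite part, and you claim these already yield the counting identities $\sigma_{x_0}=\sum_{w \in \Sigma^*} d(w)w$ and $\tau_{z_0}=\sum_{w \in \Sigma^\omega} c(w)w$. But those two results are precisely the support-level statements that the paper uses in the Boolean proof of Theorem \ref{thm:4.1}; they identify \emph{which} words are derivable, not \emph{how many} leftmost derivations each word has. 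The paper's proof of Theorem \ref{thm:16} consists exactly of swapping them for their counting counterparts: Theorem \rom{4}.1.2 is replaced by Theorem \rom{4}.1.5 of Salomaa, Soittola (the classical result that, over $\N^\infty$, the coefficient of $w$ in the relevant solution component equals the number of distinct finite leftmost derivations of $w$), and Theorem 5.5.9 of \'Esik, Kuich is replaced by Theorem 5.5.10 (the analogous multiplicity statement for the $\omega$-part). As written, your appeal to \rom{4}.1.2 and 5.5.9 attributes to them conclusions they do not provide.

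The gap is visible in your own closing paragraph: you say the argument ``follows once the bijection of Theorem \ref{thm:4.1} is reread as multiplicity-preserving rather than merely support-preserving.'' That rereading is not something that can be extracted from the Boolean-level theorems you cite --- it is exactly the content of Theorems \rom{4}.1.5 and 5.5.10, so resolving the obstacle this way is circular. You correctly identified where the difficulty lies (faithful counting across the interleaving of finite leftmost segments with $P_Z$-steps under the $P_l$-condition), but you settle it by assertion rather than by invoking the results that actually furnish it. With the two citations corrected, your proof coincides with the paper's.
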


\begin{proof}
        The proof of Theorem \ref{thm:16} is identical to the proof of Theorem \ref{thm:4.1} with the exceptions that
        Theorem \rom{4}.1.2 of Salomaa, Soittola \cite{SalSoi} is replaced by Theorem \rom{4}.1.5 and Theorem 5.5.9 of
        \'Esik, Kuich \cite{MAT} is replaced by Theorem 5.5.10.
\end{proof}

In the forthcoming Corollary \ref{cor:4.4} we consider,
for a given $\{0,1\}\langle \Sigma \cup \{\epsilon \}\rangle$-$\omega$-pushdown automaton
$\Pmc = (n,\Gamma, I, M, P, p_0,l)$ the number of distinct computations from an initial instantaneous description $(i,w,p_0)$ for $w \in \Sigma^*$, $I_i \neq 0$, to an
accepting instantaneous description $(j, \epsilon, \epsilon)$,
with $P_j \neq 0$, $i,j \in \{0, \dots, n\}$.

Here $(i,w,p_0)$ means that $\Pmc$ starts in the initial state $i$ with $w$ on its input tape and $p_0$ on its
pushdown tape;
and $(j,\epsilon,\epsilon)$ means that $\Pmc$ has entered the final state $j$ with empty input tape and empty pushdown tape.

Furthermore, we consider the number of distinct infinite computations starting in an initial instantaneous description
$(i,w,p_0)$ for $w \in \Sigma^\infty$, $I_i \neq 0$.

\begin{corollary}\label{cor:4.4}
        Assume that
        the mixed context-free grammar $G_l$ associated to the mixed algebraic system \eqref{sys:8} is constructed from the $\{0,1\}\langle \Sigma \cup \{\epsilon \}\rangle$-$\omega$-pushdown automaton $\Pmc$.
        Then the number (possibly $\infty$) of distinct finite leftmost derivations of $w$, $w \in \Sigma^*$, from $x_0$
        equals the number of distinct finite computations from
        an initial instantaneous description for $w$ to an accepting instantaneous description;
        moreover, the number  (possibly $\infty$) of distinct infinite (leftmost) derivations of $w$, $w \in \Sigma^\omega$, from $z_0$ equals the number of distinct
        infinite computations starting in an initial instantaneous description for $w$.
\end{corollary}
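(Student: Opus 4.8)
The plan is to deduce the statement directly from Theorem~\ref{thm:16}, which records the two multiplicity-generating series, together with the identification of the two relevant components of the order-$l$ solution supplied by Theorem~\ref{thm:15}. Theorem~\ref{thm:16} already tells us that, over $\N^\infty$ with entries of $I,M,P$ taken from $\{0,1\}\langle\Sigma\cup\{\epsilon\}\rangle$, the coefficient of $w$ in $\sigma_{x_0}$ is exactly $d(w)$, the number of distinct finite leftmost derivations of $w$ from $x_0$, and the coefficient of $w$ in $\tau_{z_0}$ is exactly $c(w)$, the number of distinct infinite leftmost derivations of $w$ from $z_0$. Hence the entire task reduces to showing that $d(w)$ equals the number of distinct finite computations of $\Pmc$ from an initial instantaneous description for $w$ to an accepting one, and that $c(w)$ equals the number of distinct infinite computations of $\Pmc$ starting from an initial instantaneous description for $w$.

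First I would invoke Theorem~\ref{thm:15}, which gives, for the order-$l$ solution, $\sigma_{x_0} = I(M^*)_{p_0,\epsilon}P$ and $\tau_{z_0} = I(M^{\omega,l})_{p_0}$. Expanding $I(M^*)_{p_0,\epsilon}P = \sum_{1\leq i,j\leq n} I_i\,((M^*)_{p_0,\epsilon})_{i,j}\,P_j$ as a product of series over $\Sigma$, the coefficient of a word $w$ counts the factorisations $w = a\,u\,b$ with $(I_i,a)\neq 0$, $(P_j,b)\neq 0$, and $u$ carried by a stack-emptying path of the pushdown graph from $(p_0,i)$ to $(\epsilon,j)$. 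Since every nonzero entry of $I$, $M$, $P$ is a monomial with coefficient $1$, each such factorisation is counted precisely once; and by the $\delta$-correspondence recorded after the definition of $G_l$ — namely $((M_{p,p_1\dots p_k})_{i,m_1},a)\neq 0$ iff $(m_1,p_k\dots p_1)\in\delta(i,a,p)$ — these paths are in bijection with the finite computations from $(i,w,p_0)$ to $(j,\epsilon,\epsilon)$. Summing over $i,j$ yields the total number of distinct finite computations for $w$, establishing the first equality.

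For the infinite case I would argue analogously with $I(M^{\omega,l})_{p_0} = \sum_{1\leq i\leq n} I_i\,((M^{\omega,l})_{p_0})_i$. By the very definition of $M^{\omega,l}$, the coefficient of $w$ in $((M^{\omega,l})_{p_0})_i$ is the sum of the weights of all infinite paths of the pushdown graph starting at $(p_0,i)$ whose state component lies in $\{1,\dots,l\}$ for infinitely many steps; with $0/1$-coefficient monomials this weight is simply the number of such paths spelling the corresponding tail of $w$. The condition ``state component in $\{1,\dots,l\}$ infinitely often'' is exactly the repeated-state (B\"uchi) acceptance condition of $\Pmc$ encoded through $P_l$. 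Via the same $\delta$-correspondence, these accepting infinite paths are in bijection with the distinct infinite computations of $\Pmc$ starting from an initial instantaneous description for $w$, so the coefficient equals their number; summing over $i$ gives the second equality.

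The main obstacle is the bookkeeping in the two coefficient-extraction steps: one must verify that, over $\N^\infty$, the product expansions of $I(M^*)_{p_0,\epsilon}P$ and $I(M^{\omega,l})_{p_0}$ count each computation path exactly once — neither collapsing distinct paths that spell the same word nor double-counting — and that the infinitary acceptance encoded by $M^{\omega,l}$ matches the repeated-state condition on computations of $\Pmc$. Both points are, however, precisely what the $\delta$-correspondence and the definitions of the behavior and of $M^{\omega,l}$ supply, so once Theorems~\ref{thm:15} and~\ref{thm:16} are in hand the remaining verification is routine combinatorial accounting rather than a new idea. Indeed, over $\B$ the corresponding set-valued assertion is Corollary~\ref{cor:4.2}, and the present corollary is exactly its multiplicity-counting refinement over $\N^\infty$.
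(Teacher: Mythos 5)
Your proposal is correct, but it is organized differently from the paper's own (very terse) proof. The paper disposes of the corollary in one line: the finite-word half is quoted directly from Corollary 6.11 of Kuich \cite{78} --- the classical counting form of the triple construction, which already asserts the correspondence between leftmost derivations of the triple grammar and finite computations of the pushdown automaton --- and the infinite-word half is attributed to ``the definition of infinite derivations with respect to $G_l$'', i.e., to the fact that the $P_Z$-productions and the $P_l$-condition were designed to mirror transitions and the B\"uchi repeated-state condition. You instead compose the paper's own Theorem~\ref{thm:16} (the derivation counts $d(w)$, $c(w)$ are the coefficients of $\sigma_{x_0}$, $\tau_{z_0}$) with Theorem~\ref{thm:15} (the order-$l$ solution has $\sigma_{x_0}=I(M^*)_{p_0,\epsilon}P$ and $\tau_{z_0}=I(M^{\omega,l})_{p_0}$), and then re-derive inline, by expanding the matrix products and using the $\delta$-correspondence, that these behavior coefficients count computations; in effect you re-prove Kuich's Corollary 6.11 rather than cite it, and you give the analogous path-counting argument for $M^{\omega,l}$ rather than appeal to the definition of infinite derivations. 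What your route buys: it stays self-contained within the paper's numbered results, and it makes explicit two points the paper leaves implicit --- that over $\N^\infty$ every computation contributes exactly $1$ (so coefficients literally count computations), and that ``infinite computations'' in the statement must be read as those satisfying the repeated-state condition encoded by $P_l$ and $M^{\omega,l}$. What the paper's route buys is brevity and a direct derivation-to-computation correspondence that does not pass through coefficient extraction. One small inaccuracy to fix: the nonzero entries of $I$, $M$, $P$ are polynomials in $\{0,1\}\langle\Sigma\cup\{\epsilon\}\rangle$, not necessarily single monomials; your argument is unaffected, since all you actually use is that every nonzero coefficient equals $1$, so each expansion term (equivalently, each computation) is counted exactly once.
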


\begin{proof}
        By Corollary 6.11 of Kuich \cite{78} and the definition of infinite derivations with respect to $G_l$.
\end{proof}

The context-free grammar $G_l$ associated to \eqref{sys:8}
is called \emph{unambiguous} if each $w \in L(G_l)$,
$w \in \Sigma^*$ has a unique finite leftmost derivation
and each $w \in L(G_l)$, $w \in \Sigma^\omega$,
has a unique infinite (leftmost) derivation.

An $\Nlse$-$\omega$-pushdown automaton $\Pmc$ is called
\emph{unambiguous} if $(\bhvr{\Pmc},w) \in \{0,1\}$ for each $w \in \Sigma^* \cup \Sigma^\omega$.

\begin{corollary}
        Assume that
        the mixed context-free grammar $G_l$ associated to the mixed algebraic system \eqref{sys:8} is constructed from the  $\{0,1\}\langle \Sigma \cup \{\epsilon \}\rangle$-$\omega$-pushdown automaton $\Pmc$.
        Then $G_l$ is unambiguous iff $\bhvr{\Pmc}$
        is unambiguous.
\end{corollary}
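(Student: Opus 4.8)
The plan is to read off both sides of the equivalence as conditions on derivation counts, using Theorem \ref{thm:16} as the bridge between the coefficients of the order-$l$ solution and the number of derivations in $G_l$. First I would specialize Theorem \ref{thm:15} to the complete semiring-semimodule pair $(\N^\infty \llss, \N^\infty \llso)$: for the $\{0,1\}\langle \Sigma \cup \{\epsilon\}\rangle$-$\omega$-pushdown automaton $\Pmc$ it yields an order-$l$ solution $(\sigma,\tau)$ of \eqref{sys:8} whose $x_0$- and $z_0$-components recover the behavior, namely
\begin{equation*}
\bhvr{\Pmc} = (I(M^*)_{p_0,\epsilon}P,\, I(M^{\omega,l})_{p_0}) = (\sigma_{x_0},\, \tau_{z_0}).
\end{equation*}
Since the entries of $I$, $M$, $P$ lie in $\{0,1\}\langle \Sigma \cup \{\epsilon\}\rangle$, Theorem \ref{thm:16} applies and gives $\sigma_{x_0} = \sum_{w \in \Sigma^*} d(w)\,w$ and $\tau_{z_0} = \sum_{w \in \Sigma^\omega} c(w)\,w$, where $d(w)$ counts the distinct finite leftmost derivations of $w$ from $x_0$ and $c(w)$ the distinct infinite derivations $z_0 \derl^{\omega,l} w$. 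Reading off coefficients, $(\bhvr{\Pmc}, w) = d(w)$ for $w \in \Sigma^*$ and $(\bhvr{\Pmc}, w) = c(w)$ for $w \in \Sigma^\omega$.

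With this dictionary I would finish by a direct comparison of the two definitions. On the grammar side, $w \in L(G_l) \cap \Sigma^*$ holds exactly when $d(w) \geq 1$ and $w \in L(G_l) \cap \Sigma^\omega$ exactly when $c(w) \geq 1$; hence the supports of $\sigma_{x_0}$ and $\tau_{z_0}$ are $L(G_l) \cap \Sigma^*$ and $L(G_l) \cap \Sigma^\omega$. Thus $G_l$ is unambiguous --- every word of $L(G_l) \cap \Sigma^*$ has a unique finite leftmost derivation and every word of $L(G_l) \cap \Sigma^\omega$ a unique infinite derivation --- precisely when $d(w) \in \{0,1\}$ for all $w \in \Sigma^*$ and $c(w) \in \{0,1\}$ for all $w \in \Sigma^\omega$. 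By the coefficient identities above this is exactly the condition $(\bhvr{\Pmc}, w) \in \{0,1\}$ for all $w \in \Sigma^* \cup \Sigma^\omega$, i.e.\ that $\bhvr{\Pmc}$ is unambiguous; chaining the two equivalences proves both directions.

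There is no deep obstacle here --- the content is entirely carried by Theorems \ref{thm:15} and \ref{thm:16} --- so the proof is definition-chasing. The one point worth making explicit, and the reason the statement belongs in the $\N^\infty$ part of this section rather than the $\B$ part, is that the equivalence genuinely requires a weight structure that records multiplicities: over $\B$ a coefficient of $w$ only records membership in $L(G_l)$ and already lies in $\{0,1\}$ whatever the number of derivations, so it cannot separate an unambiguous grammar from an ambiguous one. Working over $\N^\infty$ preserves the counts $d(w)$ and $c(w)$, and it is exactly this that makes ``coefficient at most $1$ on the support'' coincide with ``a unique derivation for each word of the language.''
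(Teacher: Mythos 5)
Your proof is correct and takes the intended route: the paper states this corollary without an explicit proof, since it is an immediate consequence of Theorem \ref{thm:15} (identifying $(\bhvr{\Pmc},w)$ with the coefficients of $\sigma_{x_0}$ and $\tau_{z_0}$) combined with Theorem \ref{thm:16} (identifying those coefficients with the derivation counts $d(w)$ and $c(w)$), which is exactly the chain of identifications you spell out. Your closing remark on why the argument needs $\N^\infty$ rather than $\B$ is also apt, as it explains the placement of the statement in the paper.
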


%\nocite{*}
\bibliographystyle{eptcs}
\bibliography{drostekuich}
\end{document}